\begin{document}
\title{A Scheme to resist Fast Correlation Attack for Word Oriented LFSR based Stream Cipher}

\author{Subrata Nandi\inst{1} \and Srinivasan Krishnaswamy\inst{2}
\and Pinaki Mitra\inst{1}}

\institute{Department Of Computer Science and Engineering,Indian Institute of Technology Guwahati,India\\
\email{subrata.nandi@iitg.ac.in}\\
\email{pinaki@iitg.ac.in}
\and
Department of Electrical and Electronics and Electrical Engineering,Indian Institute of Technology Guwahati,India\\
\email{srinikris@iitg.ac.in}
}

\maketitle     
\begin{abstract}
In LFSR-based stream ciphers, the knowledge of the feedback equation of the LFSR plays a critical role in most attacks. In word-based stream ciphers such as those in the SNOW series, even if the feedback configuration is hidden, knowing the characteristic polynomial of the state transition matrix of the LFSR enables the attacker to create a feedback equation over $GF(2)$. This, in turn, can be used to launch fast correlation attacks.
In this work, we propose a  method for hiding both the feedback equation of a  word-based LFSR and the characteristic polynomial of the state transition matrix.  Here, we employ a $z$-primitive $\sigma$-LFSR whose characteristic polynomial is randomly sampled from the distribution of primitive polynomials over $GF(2)$ of the appropriate degree. We propose an algorithm for locating $z$-primitive $\sigma$-LFSR configurations of a given degree. Further, an invertible matrix is generated from the key. This is then employed to generate a public parameter which is used to retrieve the feedback configuration using the key. If the key size is  $n$- bits,  the process of retrieving  the feedback equation from the public parameter has a average time complexity $\mathbb{O}(2^{n-1})$.
The proposed method has been tested on SNOW 2.0 and SNOW 3G for resistance to fast correlation attacks. We have demonstrated that the security of SNOW 2.0 and SNOW 3G increases from 128 bits to 256 bits.
\keywords{  $z-$primitive $\sigma-$LFSR \and Primitive polynomial \and Fast Correlation Attack \and 
$M-$companion Matrix \and 
Word-based Stream Cipher}

\end{abstract}

\section{Introduction}
Stream ciphers are often used to secure information sent over insecure communication channels. Here, the ciphertext is the XOR sum of a pseudo-random keystream and the plaintext. Many stream ciphers are designed using LFSRs as they are extremely easy to implement both in hardware and software. Various word-oriented stream cipher configurations have been proposed to effectively utilize the word based architecture of modern processors. These include SNOW 2.0\cite{ekdahl2002new}, SNOW 3G\cite{orhanou2010snow}, SNOW V\cite{ekdahl2019new}and Sosemanuk \cite{berbain2008sosemanuk}. These ciphers are based on LFSRs with multi-bit delay blocks and allow for extremely quick software implementations. 

The feedback configurations of these LFSRs are publicly known, and this plays a key role in many well-known plaintext attacks such as algebraic attacks, fast correlation attacks, distinguishing attacks, guess and determine attacks etc.
The KDFC scheme\cite{nandi2022key} provides resistance against some of these attacks by concealing the feedback configuration. However, in this scheme, the characteristic polynomial of the LFSR over $GF(2)$ is known. This, in turn, leads to cryptanalysis by a Fast correlation attack(FCA) \cite{gong2020fast}. In this article, we propose a scheme to resist such attacks.
 \subsection{Related Works}
 Multiple correlation attacks and distinguishing attacks have been proposed for SNOW 2.0 and SNOW 3G\cite{watanabe2003distinguishing,nyberg2006improved,maximov2005fast,lee2008cryptanalysis,zhang2015fast,funabiki2018several,yang2019vectorized,gong2020fast}. The first description of a Fast correlation attack for word-based stream ciphers is found in \cite{meier1989fast}. An enhanced version of this attack is given in \cite{chepyzhov2000simple}. These attacks utilize a linear recurring relation with coefficients in $F_2$.
 For SNOW 2.0, this relation has a degree of 512. The attack described in \cite{lee2008cryptanalysis} has a time complexity of $2^{212.38}$. The attack described in \cite{zhang2015fast} considers the LFSR in SNOW2.0 as being over $F_{2^8}$. This results in a linear recurring relation of degree 64. Here, parity check equations are generated using the Wagner's $k$-tree algorithm described in \cite{wagner2002generalized}. The time complexity of this attack is $2^{164.5}$, which is roughly $2^{49}$ times better than the attack described in \cite{lee2008cryptanalysis}. However, the feedback function of LFSR (as an equation over $F_{2^{32}}$) is critical for this attack. \cite{funabiki2018several} explains a Mixed Integer Linear Programming(MILP) based linear mask search on SNOW 2.0 to find a better correlation $2^{-14.411}$ for the FSM approximation equation. It recovers the Key of SNOW 2.0 with time complexity $2^{162.91}$.  \cite{yang2019vectorized} describes a vectorized linear approximation attack on SNOW 3G with a bias value of $2^{-40}$ and time complexity of  $2^{177}$ to find the state of the SNOW 3G LFSR. \cite{gong2020fast} proposes another attack model based on a modified Wagner K-tree algorithm and linear approximation of some composition function with time complexity $2^{162.86}$  for SNOW 2.0 as and time complexity $2^{222.33}$ for SNOW 3G. This attack considers the LFSR as a state transition machine over $\mathbb{F}_2$ and uses the characteristic polynomial of the state transition matrix.  This polynomial has degree 512. All these algorithms need a linear recurring relation that the LFSR satisfies. Our main contribution, as stated in the following subsection, is to deny the attacker the knowledge of any such linear recurring relation.
 \subsection{ Our contributions}
 In this paper, we give an algorithm to generate a $z-$ primitive $\sigma-$LFSR configuration with a random primitive characteristic polynomial which is sampled from the uniform distribution on all primitive polynomial of a given degree. In the proposed scheme, there are two levels of randomness; in the choice of the primitive characteristic polynomial of certain degree (over $GF(2)$) and in the choice of feedback configuration for a given characteristic polynomial. The purpose of doing this is to deny the attacker the knowledge of any linear recurring relation that the LFSR satisfies. This is done to counter Fast Correlation Attacks which use such linear recurring relations to generate parity check equations.    
 \subsection{Organization of the Paper}
 Section 2 deals with notations, definitions and theorems related to $z-$primitive $\sigma-$LFSRs and Fast Correlation Attacks. In section 3, we describe a $z-$primitive $\sigma-$LFSR generation algorithm with proof of correctness and a short example. Section 4 gives a brief description of stream ciphers SNOW 2.0 and SNOW 3G. Section 5 illustrates the proposed method for pre-processing, initialization and key generation. In section 6, this method is applied on SNOW 2.0 and SNOW 3G and the resulting resistance to fast correlation attacks is demonstrated . Finally, the conclusions along with some additional discussion and future work are outlined in Section 7.

\section{Preliminearies}
In this section,  we introduce  notations and definitions that are used throughout this article. 

\subsection{Notations}
The following is the list of notations used in this paper:
  \begin{table}[H]
  	\centering
  	\begin{tabular}{c|c}
  		\hline
  		\textbf{Symbol}   & \textbf{Meaning}\\
  		\hline
  		$GF(p^n)$,$\mathbb{F}_{p^n}$  &  Finite field of cardinality $p^n$, where p is a prime \\
  		\hline
    
  		$\mathbb{F}_p^n$ &   $n$-dimensional vector space over $\mathbb{F}_p$             \\
  		\hline
  		$GF(P)$ & Galois Field with elements $\in\{0,1,\cdots,p-1\}$\\
  		\hline
		$M_m(F_2)$ & Matrix Ring over $F_2$\\
  		  		\hline
            $\delta$ & Circular Left shift Operator of a Sequence \\
  		$\mathbb{N}$ & Natural Number\\
  		\hline
  		$\mathbb{R}$ & Real Number\\
  		\hline
  		$P(x)$ & Probability of a random varibale x\\
  		\hline
  		$v^T$ & Transpose of a vector $v$.\\
  		\hline
  		$v_1.v_2$ & Dot product between two vectors $v_1$ and $v_2$ over $F_2$\\
  		\hline
  		$\oplus $ &     XOR    \\  
  		\hline
  		$\boxplus$ & Addition modulo $2^n$\\
  		\hline
  		$M^{n \times n}$ & Matrix M with $n$ rows and $n$ columns\\
  		\hline
  		$e_1^i$ & row vector with $i-$th entry is 1 and the other entries are 0. \\
  		\hline
  		$A^T$ & Transpose of a matrix $A$.\\
  		\hline
  		$M^{-1}$ & Inverse of a matrix $M$. \\
  		\hline
  		GCD(a,b) & Greatest common divisiors of a and b\\
  		\hline
            $M[i,j]$ & $j-$th element of $i-$th row of $M$ matrix.\\
            \hline
            $M[i,:]$ & $i-$th row of a Matrix $M$\\
            \hline
            $M[:,i]$ & $i-$th column of a Matrix $M$\\
            \hline
            $[v_1,v_2,\ldots,v_n]$ & A matrix with columns $v_1,v_2,\ldots,v_n$\\
             \hline
            $[v_1;v_2;\ldots;v_n]$ & A matrix with rows $v_1,v_2,\ldots,v_n$\\
             \hline
            $\&\&$ & Bitwise AND Operator\\
            \hline
            $||$ & Concatenation Operator\\
            \hline
  	\end{tabular}
  	\caption{Symbol and their meaning}
  	\label{tab:my_label}
  \end{table}
  \subsection{SNOW 2.0 and SNOW 3G}
The SNOW series of stream ciphers represent a class of stream ciphers where a word based Linear Feedback Shift Register (LFSR) is coupled with a Finite State Machine(FSM). The keystream is the sum of the outputs of the the LFSR and the FSM
 The main distinction between SNOW 2.0 and SNOW 3G  is that the later has an additional register in the FSM (designated by $Rt3$).
In both schemes, the LFSR is composed of $16$ $32$-bit delay blocks with the feedback polynomial $\gamma .z^{16} + z^{14}+ \gamma^{-1}+z^5+1 \in F_{2^{32}}[z]$, where $\gamma \in  F_2^{32}$ is a root of the primitive polynomial $x^4+\alpha^{23} x^3+ \alpha^{245}.x^2+\alpha^{48}x+\alpha^{239} \in F_{2^{8}}[x]$, and is a root of the polynomial $y^8+y^7+y^5+y^3+1 \in F_2[y]$.
\begin{figure}
	\centering
	\includegraphics[width=1.1\linewidth, height=.4\textheight]{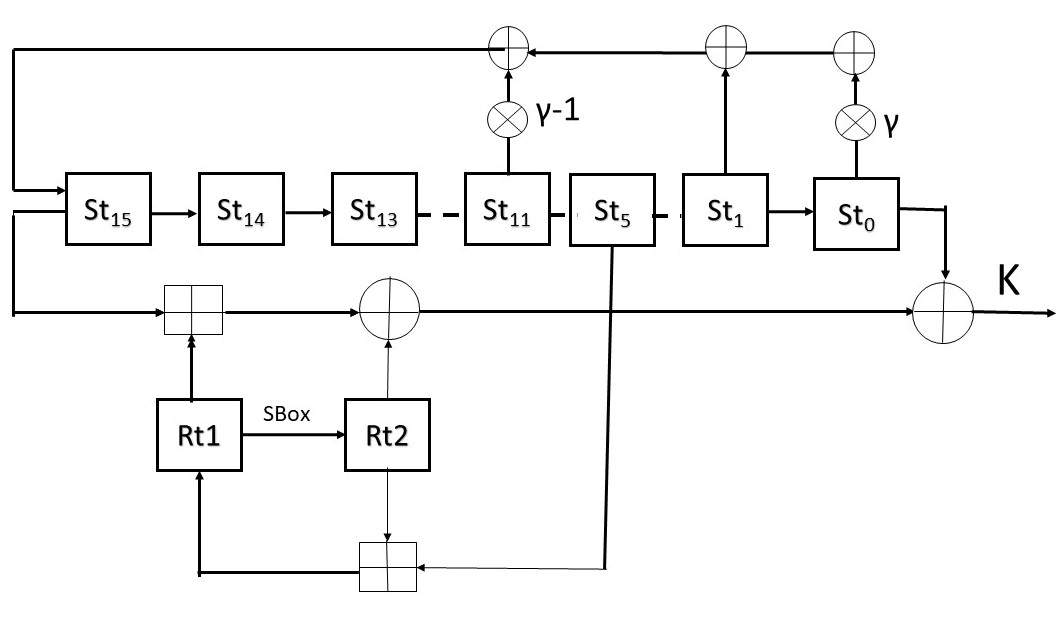}
	\caption{Block Diagram of SNOW 2.0}
	\label{fig:snow-2}
\end{figure}

At time instant  $t\ge 0$,  the state of the LFSR is defined as $(St_{t+15}, St_{t+14},..., St_t) \in F_2^{32\times 16}$. 

Let the output of the registers in the of FSM of SNOW2 at time instant $t$ be denoted by $Rt1_t$ and $Rt2_t$. The output of the FSM at time instant $t$ is denoted by $F_t$, The FSM in SNOW2 is governed by the following equations
\begin{eqnarray*}\label{eq1}
	F_t&=&(St_{t+15}\boxplus {Rt1}_t) \oplus {Rt2}_t, t\ge 0\\
 Rt1_{t+1}&=&St_{t+5} \boxplus Rt2_t\\
		Rt2_{t+1}&=&SBox(Rt1_t)
\end{eqnarray*}

where $Sbox$ is a bijection over $F_2^{32}$, composed of  four parallel AES S-boxes  followed by the AES MixColumn transformation.\par`

Let the outputs of the registers in the FSM of SNOW3G at time instant $t$ be denoted by
$Rt1_t, Rt2_t$ and $Rt3_t$ respectively. Let the output of the FSM at time instant $t$ be denoted by $F_t$. The FSM is governed by the following equations:
\begin{eqnarray*}
Ft_t &=& (St_{t+15} \boxplus Rt1_t)\oplus Rt2_t\\
Rt1_{t+1}&=&(St_{t+5} \oplus Rt3_t)\boxplus Rt2_t\\
		Rt2_{t+1}&=&SBox1(Rt1_t)\\
		Rt3_{t+1}&=&SBox2(Rt2_t)
\end{eqnarray*}
where $SBox1$ in equation \ref{eq1} is same as the SBox of SNOW 2.0 while $SBox2$ is  another bijection over $GF(2^{32})$,  based on the Dickson polynomial.
\begin{figure}[H]
	\centering
	\includegraphics[width=1.1\linewidth, height=0.4\textheight]{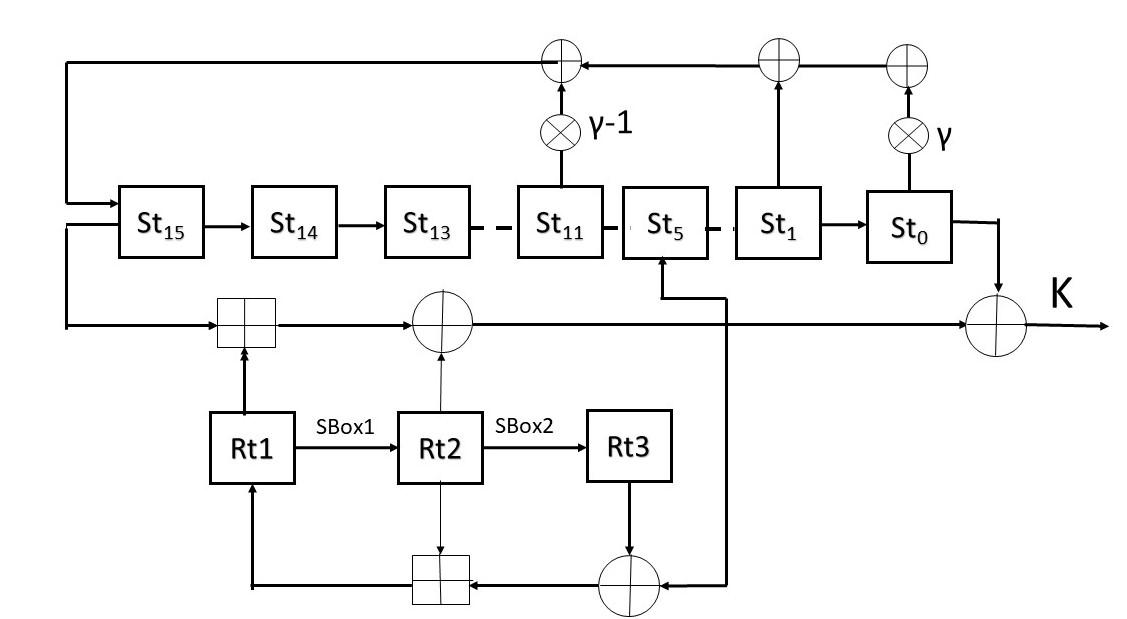}
	\caption{Building Block of SNOW 3G Cipher}
	\label{fig:snow-3g}
\end{figure}

  \subsection{Fast Correlation Attack on Word Based LFSRs}

  In a Fast Correlation Attack, 
  every window of the keystream $\{Y_1.....Y_N\}$ is seen as the noisy enncoding of the initial state of LFSR. The noise is used to model the effect of nonloinearity in the FSM of the stream cipher.    The parity check equations of the code are generated from the feedback equation of the LFSR along with the linear approximation of the FSM. Thus, the problem of recovering the initial state of the LFSR reduces to that of decoding an $[N,l]$ linear code where $l$ is the size of the LFSR. Using the  Wagners K-tree algorithm , this problem reduces to decoding a smaller code $[N_k,l']$ where $l'<l$. In (\cite{lee2008cryptanalysis,todo2018fast,gong2020fast,yang2019vectorized}), the feedback equation of degree $512$ is used  to successfully recover the state of the LFSR of SNOW 2.0 and SNOW 3G. On the contarary (\cite{zhang2015fast}) uses a feedback equation of degree $64$ over $F_{2^{8}}$.
\par
 Correlation attacks can be resisted by making the feedback equation dependent on the key. A procedure for generating a key dependent feedback equation is given in \cite{nandi2022key}. This can resist FCAs like the one given in (\cite{zhang2015fast}), where the feedback equation of the LFSR is seen as an equation with coefficients from extensions of $F_{2}$. However, in this scheme  the chararacteristic polynomial of the state transition matrix of the LFSR is the same as that in SNOW 2.0 and SNOW 3G. This polynomial can be used to generate a Linear Recrring Relation with coefficeints from $\mathbb{F}_2$ that the output of the LFSR  satisfies .This makes the scheme vulnerable to fast correlation attacks like the one given in \cite{gong2020fast} that employ the linear recurring relation over $\mathbb{F}_2$.

\subsection{$\sigma$-LFSRs and z-Primitive $\sigma$-LFSR}
 A $\sigma-$LFSR is a word based LFSR with multi-input multi-output delay blocks. The outputs of the delay blocks are multiplied by gain matrices and then added. This sum is fed back to the input of the first block. The output of a $\sigma$-LFSR with $m$-input $m$-output delay blocks is a vector sequence which satisfies the following recurrence relation,
\begin{equation}
\mathbf{s_{n+b}}  = \mathbf{s_{n+b-1}}B_{b-1} + \mathbf{s_{n+b-2}}B_{b-2} +\cdots + \mathbf{s_n}B_0
\end{equation}
where  $s_i \in \mathbf{F_2^{m}}$ and $B_i \in \mathbf{F_2^{m\times m}}$. The $B_i$s are the feedback gain matrices. The following matrix is called the configuration matrix of the $\sigma$-LFSR.
\begin{equation}\label{SU}
C=
\begin{bmatrix}
0 & I & 0 & \cdots & 0 \\
0 & 0 & I & \cdots & 0 \\
\vdots & \vdots & \vdots & \cdots &\vdots\\
0 & 0 & 0 & \cdots & I\\
B_0 & B_1 & B_2 & \cdots & B_{b-1}
\end{bmatrix} \in \mathbb{F}_2^{mb \times mb}
\end{equation}
where $0 \in\mathbb{F}_2^{m\times m}$ is the all zero matrix and $I\in F_2^{m \times m}$ is identity matrix. We shall refer to the structure of this matrix as the $m$-companion structure. The characteristic polynomial of a $\sigma$-LFSR is the characteristic polynomial of its configuration matrix. The period of the sequence generated by a $\sigma$-LFSR is maximum if its characteristic polynomial is primitive.

The vector got by appending the outputs of the delay blocks of a $\sigma$-LFSR at a given time instant is known as the \textbf{state vector of the $\sigma$-LFSR} at that time instant. This vector is an element of $\mathbb{F}_2^{mb}$. In this work, we will consider the state vector of a $\sigma$-LFSR to be a column vector. Thus, for $0\leq i \leq b-1$, if $s_i$ is the output of the $i$-th delay block at any given time instant, then the state vector of the $\sigma$-LFSR at that time instant is $c = \left(\begin{matrix} s_0\\s_1\\ \vdots \\s_{b-1}\end{matrix}\right)$

Two consecutive state vectors of a $\sigma$-LFSR are related by the following equation,
\begin{equation}
    c_{k+1} = C\times c_k
\end{equation}
where $c_k$ and $c_{k+1}$ are the state vectors at the $k$-th and $k+1$-th time instant respectively and $C$ is the configuration matrix of the $\sigma$-LFSR.

The output of a $\sigma$-LFSR with $b$, $m$-input $m$-output delay blocks can be seen as a collection of $m$ scalar sequences emanating from the $m$ outputs of the first delay block. These scalar sequences are known as the \textbf{component sequences} of the $\sigma$-LFSR. Any $n = mb$ consecutive entries of the component sequence constitute a state vector of the component sequences. \textbf{In this work, we consider state vectors of component sequences to be row vectors}. Two consecutive state vectors of any component sequence of a $\sigma$-LFSR, say $w_i$ and $w_{i+1}$,  are related by the following equation.
\begin{equation}
w_{i+1} = w_i \times M
\end{equation}
where the matrix $M$ is the companion matrix of the $\sigma$-LFSR i.e., if the characteristic polynomial of the $\sigma$-LFSR is $p(x) = x^n + c_{n-1}x^{n-1} + \cdots+ c_0$ then the matrix $M$ is given as follows,
\begin{equation}\label{P}
M=
\begin{bmatrix}
0 & 0 & 0 & \cdots & c_0 \\
1 & 0 & 0 & \cdots & c_1 \\
\vdots & \vdots & \vdots & \cdots &\vdots\\
0 & 0 & 0 & \cdots & c_{n-2}\\
0 & 0 & 0 & \cdots & c_{n-1}\\
\end{bmatrix}\in \mathbb{F}_2^{b \times b}.
\end{equation}

\begin{definition}
    The minimal polynomial of a sequence is the characteristic polynomial of the LFSR with the least number of delay blocks that generates the sequence.
\end{definition}

If a $\sigma$-LFSR is primitive, then each of its component sequences have the same period as that of the output sequence of the $\sigma$-LFSR. Further, the minimal polynomial of each of these sequences is the same as the characteristic polynomial of the $\sigma$-LFSR. In fact, each of these sequences are shifted versions of each other.

The distance between two component sequences $S^i$ and $S^j$ of a primitive $\sigma$-LFSR is defined as the number of left shifts needed to get to $S^j$ from $S^i$. Let $\delta$ denote the left shift operator i.e, for a sequence $S$, $\delta S(0) = S(1)$ and  $\delta^i S(0) = S(i)$.

\begin{definition}
The distance vector of a primitive $\sigma-$LFSR $S$ is defined as the set of integers $D=(d_1,d_2,\cdots,d_{m-1})$ where $d_i$ is the distance between the first and $i+1$-th component sequences, i.e. $S^i=\delta^{d_i} \times S^1$. 
\end{definition}

\begin{lemma}\label{tan}\cite{tan2012construction}
	Let $\alpha$ be a root of a primitive polynomial $f(x) = c_0 +c_1x +\hdots+c_{n-1}x^{n-1} +x^n$ over $GF(2)$ having degree $n=mb$. Let $D_m=(d_0,d_1,\cdots,d_{m-1})$ be the distance vector of a $\sigma$-LFSR with characteristic polynomial $f(x)$ having $b$, $m$-input $m$-output delay blocks. The set
	\[A=\{1,\alpha,\cdots,\alpha^{b-1},\alpha^{d_1},\alpha^{d_1+1},\cdots,\alpha^{d_1+b-1},\cdots,\alpha^{d_{m-1}},\alpha^{d_{m-1}+1},\cdots,\alpha^{d_{m-1}+b-1}\}\] 
	is a basis for $GF(2^{mb})$ as a vector space over $GF(2)$.
\end{lemma}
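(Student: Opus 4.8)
The plan is to reduce the statement to a spanning claim and settle that claim by transporting everything into $GF(2^{mb})$. Since $|A| = mb = \dim_{\mathbb{F}_2} GF(2^{mb})$, it suffices to show that $A$ spans $GF(2^{mb})$ over $GF(2)$; equivalently, by nondegeneracy of the trace form $\mathrm{Tr} = \mathrm{Tr}_{GF(2^{mb})/GF(2)}$, it suffices to show that the only $\gamma\in GF(2^{mb})$ with $\mathrm{Tr}(\gamma\,a)=0$ for all $a\in A$ is $\gamma = 0$. I would prove this by exhibiting an $\mathbb{F}_2$-linear map $L\colon GF(2^{mb})\to\mathbb{F}_2^{mb}$, assembled from the elements of $A$, whose kernel is exactly the set of such $\gamma$, and whose surjectivity is forced by the primitivity of the $\sigma$-LFSR.

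First I would record two structural facts about the $\sigma$-LFSR. From the $m$-companion shape of the configuration matrix $C$ in \eqref{SU}, the recursion $c_{t+1}=C\,c_t$ yields $\mathbf{s}_i(t+1)=\mathbf{s}_{i+1}(t)$ for $0\le i\le b-2$, where $\mathbf{s}_k(t)\in\mathbb{F}_2^m$ is the output of the $k$-th delay block at time $t$; iterating gives $\mathbf{s}_k(t)=\mathbf{s}_0(t+k)$. Since $\mathbf{s}_0(t)=(S^0(t),\dots,S^{m-1}(t))$ is the vector of component sequences, the state vector $c_t$, read as an $mb$-tuple with coordinates indexed by (output $i$ of block $k$) for $0\le i\le m-1$, $0\le k\le b-1$, has $(i,k)$-entry equal to $S^i(t+k)$. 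Secondly, because $f$ is primitive the $\sigma$-LFSR has maximal period $2^{mb}-1$, so its (purely periodic, as $C$ is invertible) state orbit runs through every nonzero vector: $\{c_t : t\ge 0\}=\mathbb{F}_2^{mb}\setminus\{0\}$.

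Next I would invoke the standard trace representation of $m$-sequences: each component sequence has minimal polynomial the primitive $f$ with root $\alpha$, so there is a unique $\beta\in GF(2^{mb})$ with $S^0(t)=\mathrm{Tr}(\beta\,\alpha^t)$, and $\beta\ne 0$ since $S^0$ is nonzero. By the definition of the distance vector, $S^i=\delta^{d_i}S^0$ (with $d_0=0$), hence $S^i(t+k)=S^0(t+k+d_i)=\mathrm{Tr}\big(\beta\,\alpha^{d_i+k}\,\alpha^{t}\big)$. Define $L(\gamma)=\big(\mathrm{Tr}(\beta\,\alpha^{d_i+k}\,\gamma)\big)_{0\le i\le m-1,\,0\le k\le b-1}$, which is $\mathbb{F}_2$-linear; by the preceding paragraph $L(\alpha^t)=c_t$ for every $t$. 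Therefore the image of $L$ contains every nonzero vector, and $L(0)=0$, so $L$ is surjective and hence bijective. Thus $\ker L=\{0\}$, i.e. no nonzero $\gamma$ has $\mathrm{Tr}(\beta\,\alpha^{d_i+k}\,\gamma)=0$ for all $i,k$; replacing $\gamma$ by $\beta^{-1}\gamma$, no nonzero element is trace-orthogonal to all of $A=\{\alpha^{d_i+k}\}$. So $A$ spans $GF(2^{mb})$ and, having at most $mb$ elements, is a basis.

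The one delicate point --- the step I would be most careful about --- is the first structural fact: extracting from the block structure of $C$ the precise statement that the $(i,k)$ coordinate of the state at time $t$ is $S^i(t+k)$, and aligning this indexing of the $mb$ state coordinates with the indexing of the $mb$ exponents $d_i+k$ appearing in $A$. Once that bookkeeping is fixed, everything else (the trace representation, the identity $L(\alpha^t)=c_t$, surjectivity from maximal period, and the trace-form duality) is routine. It is also worth remarking that the argument uses nothing about $A$ beyond its list of exponents, and in fact shows a posteriori that the $mb$ exponents $d_i+k$ are pairwise distinct modulo $2^{mb}-1$.
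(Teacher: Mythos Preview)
The paper does not supply its own proof of this lemma: it is quoted from \cite{tan2012construction} and used as a black box, so there is no in-paper argument to compare against.

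Your proof is correct and self-contained. The key idea --- realising the state vector $c_t$ as $L(\alpha^t)$ for the trace-assembled linear map $L(\gamma)=(\mathrm{Tr}(\beta\,\alpha^{d_i+k}\,\gamma))_{i,k}$, and then reading surjectivity of $L$ directly off the fact that a primitive $\sigma$-LFSR visits every nonzero state --- is clean and makes the linear independence of $A$ drop out of trace duality with no further computation. The bookkeeping you flag (that the $(i,k)$ coordinate of $c_t$ equals $S^i(t+k)$) is indeed the only place requiring care, and your reading of the block-companion structure \eqref{SU} is right: $\mathbf{s}_i(t+1)=\mathbf{s}_{i+1}(t)$ for $i<b-1$, hence $\mathbf{s}_k(t)=\mathbf{s}_0(t+k)$. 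One small point worth making explicit is that the distance vector is attached to the configuration (the matrices $B_i$) rather than to a particular run, so implicitly you are fixing some nonzero initial state to manufacture the component sequences $S^i$ and the constant $\beta$; primitivity guarantees that the resulting $d_i$ do not depend on that choice and that $\beta\neq 0$, which is what you need for the substitution $\gamma\mapsto\beta^{-1}\gamma$ at the end.
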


Now,  $\{1,\alpha,\alpha^2,\ldots,\alpha^{n-1}\}$ is a basis for $GF(2^{mb})$ as a vector space over $GF(2)$. With respect to this basis, for $1\leq i \leq n-1$, $\alpha^i$ is represented by the vector $e_1^i$ and every vector $v \in \mathbb{F}_2^n$ represents a polynomial in $\alpha$ with degree less that $n$ . Further, the linear map corresponding to multiplication by $\alpha$ is represented by the following matrix.  
\begin{equation}\label{P}
M_{\alpha}=
\begin{bmatrix}
0 & 1 & 0 & \cdots & 0 \\
0 & 0 & 1 & \cdots & 0 \\

\vdots & \vdots & \vdots & \cdots &\vdots\\
0 & 0 & 0 & \cdots & 1\\
c_0 & c_1 & c_2 & \cdots & c_{n-1-1}\\
\end{bmatrix}
\end{equation}

As there are no zero divisors in $GF(2^{mb})$, if each entry of the set $A$ in Lemma \ref{tan} is multiplied by a polynomial in $\alpha$ with degree less that $n$, the resulting set will also be a basis for $GF(2^{mb})$. Therefore, given any non-zero vector $v \in \mathbb{F}_2^n$ the set of vectors $\{v, vM_{\alpha},\ldots, vM_{\alpha}^{b-1}, vM_{\alpha}^{d_1}, vM_{\alpha}^{d_1 +1},\ldots, vM_{\alpha}^{d_1+b-1},\ldots,vM_{\alpha}^{d_m}, vM_{\alpha}^{d_m +1},\\ \ldots, vM_{\alpha}^{d_m+b-1} \}$ is a basis for $\mathbb{F}_2^{mb}$. Further, any matrix $M$ with characteristic polynomial $f(x)$ can be derived from $M_{\alpha}$ by a similarity transformation. Hence. given any non-zero vector $v$ and a matrix $M$ with characteristic polynomial $f(x)$, the set $\{v, vM,\ldots, vM^{b-1}, vM^{d_1}, vM^{d_1 +1},\ldots, vM^{d_1+b-1},\ldots,vM^{d_m}, vM^{d_m +1},\\ \ldots, vM^{d_m+b-1} \}$ is a basis for $\mathbb{F}_2^{mb}$. Given such a matrix $M$ and a distance vector $(d_1,d_2,\ldots,d_m$, one a generate a set of vectors $\{e_1^n,v_1 = e_1^nM^{d_1}, v_2 = e_1^nM^{d_2},\ldots,v_{m-1} = e_1^nM^{d_{m-1}}$. Such a set can be used to generate an $m$-companion matrix (and thereby a $\sigma$-LFSR configuration using the following lemma.

\begin{lemma} \label{Similarity}
\cite{krishnaswamy2012multisequences}
Let $M$ be the companion matrix of a given primitive polynomial $f(x)$ of degree $n$ where $n=mb$. Each m-companion matrix can be uniquely obtained from $M$ by means of a similarity transformation $P_1\times M\times P_1^{-1}$ where $P_1$ has the following structure:
	\begin{equation*}
	P_1=[e_1^n, v_1,\cdots,v_{m-1};e_1^nM, v_1M,\cdots,v_{m-1}M;\cdots, e_1^nM^{b-1}, v_1M^{b-1},\cdots,v_{m-1}M^{b-1} ]
	\end{equation*}
	where $M$ is the companion matrix for $f(x)$.
\end{lemma}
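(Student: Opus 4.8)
The plan is to reduce everything to the single fact that a primitive polynomial is irreducible, so that $M$ is non-derogatory and its centralizer is the field $\mathbb{F}_2[M]\cong\mathbb{F}_2[x]/(f(x))\cong GF(2^n)$. Since $P_1MP_1^{-1}$ has the same characteristic polynomial as $M$, it suffices to consider $m$-companion matrices $C$ whose characteristic polynomial is $f(x)$ and to show that each such $C$ equals $P_1MP_1^{-1}$ for a \emph{unique} $P_1$ of the stated shape. Because $f$ is irreducible, the minimal polynomial of such a $C$ is a nontrivial divisor of $f$, hence equals $f$; thus $C$ is non-derogatory and therefore similar to $M$. Fix an invertible $Q_0$ with $CQ_0=Q_0M$.

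The first real step is to notice that the intertwining relation $CQ=QM$ already dictates almost the entire shape of $Q$. Partition an arbitrary such $Q$ into $b$ row-blocks of size $m$, writing $Q=[Q^{(0)};Q^{(1)};\cdots;Q^{(b-1)}]$. Matching row-blocks on the two sides of $CQ=QM$ using the $m$-companion structure of $C$ from \eqref{SU} yields $Q^{(\ell)}=Q^{(0)}M^{\ell}$ for $0\le\ell\le b-1$ (and, from the last block, the feedback matrices are then forced by $\sum_{\ell}B_{\ell}Q^{(0)}M^{\ell}=Q^{(0)}M^{b}$). Hence every invertible matrix intertwining $C$ and $M$ already has precisely the row pattern appearing in the statement, with $v_t$ equal to the $(t+1)$-st row of $Q^{(0)}$; the only thing not yet pinned down is the normalisation ``the first row of $Q^{(0)}$ equals $e_1^n$''.

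To arrange and uniquely determine that normalisation, I would use that non-derogacy of $M$ makes $Q_0^{-1}Q$ commute with $M$, so the invertible intertwiners of $C$ and $M$ are exactly $\{Q_0\,h(M):h\in\mathbb{F}_2[x],\ \gcd(h,f)=1\}$. Viewing $\mathbb{F}_2^n$ (of row vectors, with $M$ acting on the right) as a one-dimensional vector space over $\mathbb{F}_2[M]\cong GF(2^n)$, the group $\{h(M):\gcd(h,f)=1\}\cong GF(2^n)^{*}$ acts simply transitively on $\mathbb{F}_2^n\setminus\{0\}$. Let $u\neq 0$ be the first row of $Q_0^{(0)}$; there is a unique $h$ with $\gcd(h,f)=1$ such that $u\,h(M)=e_1^n$, and $P_1:=Q_0\,h(M)$ is then invertible, satisfies $CP_1=P_1M$ (since $h(M)$ commutes with $M$), has the block pattern of the previous step, and has top row $e_1^n$; so $P_1$ is of the stated form and $C=P_1MP_1^{-1}$. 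For uniqueness, if $P_1'$ also has the stated form and $CP_1'=P_1'M$, then $P_1'=P_1\,\tilde h(M)$ for some invertible $\tilde h(M)$; comparing the first rows of the zeroth blocks gives $e_1^n\tilde h(M)=e_1^n$, and simple transitivity forces $\tilde h(M)=I$, i.e. $P_1'=P_1$.

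The block-matching in the second step is routine. The part that needs care, and the real crux of the argument, is combining the row-vector/right-multiplication convention with the assertion that $\{h(M):\gcd(h,f)=1\}$ acts simply transitively on the nonzero row vectors of $\mathbb{F}_2^n$; this one fact delivers both the existence of the required normalisation and its uniqueness, while everything else is an immediate consequence of $f(x)$ being irreducible.
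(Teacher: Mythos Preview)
Your argument is correct. Note, however, that the paper does not supply its own proof of this lemma: it is quoted from \cite{krishnaswamy2012multisequences} and used as a black box (in Theorem~\ref{Matrix_State} and in the correctness proof of Algorithm~\ref{MAlgo}). There is therefore nothing in the present paper to compare your proof against.

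For the record, each step of your argument checks out against the conventions used here. The block-matching step is exactly right for the $m$-companion form in \eqref{SU}: with $Q=[Q^{(0)};\ldots;Q^{(b-1)}]$ the relation $CQ=QM$ forces $Q^{(\ell)}=Q^{(0)}M^{\ell}$ from the identity blocks and determines the $B_\ell$'s from the last block. The normalisation step is also sound: since $f$ is irreducible the centraliser of $M$ is $\mathbb{F}_2[M]\cong GF(2^n)$, so the invertible intertwiners of $C$ and $M$ form a single coset $Q_0\cdot GF(2^n)^{*}$, and the (free and transitive) right action of $GF(2^n)^{*}$ on nonzero row vectors lets you fix the first row of $Q^{(0)}$ to be $e_1^n$ in exactly one way. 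The one implicit point worth making explicit is that the first row $u$ of $Q_0^{(0)}$ is nonzero (it is a row of an invertible matrix) and that the $h$ solving $u\,h(M)=e_1^n$ automatically satisfies $\gcd(h,f)=1$, since otherwise $h(M)=0$; this guarantees $P_1=Q_0\,h(M)$ is invertible.
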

\begin{definition}
    \textbf{Matrix state of $\sigma-$LFSR sequence}
    Let $S=\{S(i)\in F_2^{m}\}_{i \in \mathbb{Z}}$ be a  sequence generated by a $\sigma$-LFSR with a primitive characteristic polynomial $f(x)$ of degree $n$. The $i$-th matrix state of $S$ is a matrix of $n$ consecutive elements of $S$ starting from the $i$-th element 
    \begin{equation*}
        Mat_S(i)=\{S(i),S(i+1),\cdots,S(i+b-1)\}_{m \times n}
    \end{equation*}
\end{definition}

    The dimension of a $\sigma-$LFSR sequence $S$ is the rank of any of its matrix states $Mat_S(i)$.
\begin{theorem}\label{Matrix_State}
Consider a $\sigma$-LFSR with $b$, $m$-input $m$-output delay blocks with configuration matrix $Q = P_1\times M\times P_1^{-1}$ where $P_1$ is as defined in Lemma \ref{Similarity} and $M$ is the companion matrix of the characteristic polynomial of the $\sigma$-LFSR. The first $m$ rows of the matrix $P_1$ constitute a matrix state of the sequence generated by the $\sigma$-LFSR.
\end{theorem}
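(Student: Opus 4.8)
The plan is to run the $\sigma$-LFSR from a carefully chosen initial state and check that its matrix state at time $0$ is, column by column, the matrix made of the first $m$ rows of $P_1$. Denote by $R\in\mathbb{F}_2^{m\times n}$ the matrix formed by the first $m$ rows of $P_1$. By the block structure of $P_1$ in Lemma \ref{Similarity}, $P_1$ is the vertical concatenation of the $m\times n$ blocks $R,RM,RM^2,\ldots,RM^{b-1}$ (its $j$-th block being $RM^{j-1}$), and $P_1$ is invertible. By hypothesis $Q=P_1MP_1^{-1}$ is the configuration matrix of a $\sigma$-LFSR with $b$ delay blocks, so it has the shape \eqref{SU} for some feedback matrices $B_0,\ldots,B_{b-1}$.

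First I would track the $\sigma$-LFSR states. For $1\le j\le n-1$ the $j$-th column of the companion matrix $M$ has a single nonzero entry, in position $j+1$; hence $QP_1=P_1M$ yields $Q\,P_1[:,j]=P_1[:,j+1]$ for $1\le j\le n-1$. Consequently, if the $\sigma$-LFSR is initialised in the (nonzero) state $c_0=P_1[:,1]$, its successive states are $c_k=P_1[:,k+1]$ for $0\le k\le n-1$. Second, I would identify the output sequence: writing $c_k=(u_0^{(k)};u_1^{(k)};\ldots;u_{b-1}^{(k)})$ with blocks $u_i^{(k)}\in\mathbb{F}_2^m$ stacked vertically, the shape \eqref{SU} of $Q$ forces $u_i^{(k+1)}=u_{i+1}^{(k)}$ for $0\le i\le b-2$, so $u_i^{(k)}=u_0^{(k+i)}$; thus, with $S(k):=u_0^{(k)}$ the output (first-block) vector at time $k$, each state $c_k$ is the vertical stack of the $b$ consecutive output vectors $S(k),S(k+1),\ldots,S(k+b-1)$, and $S=\{S(k)\}_k$ is the vector sequence generated by the $\sigma$-LFSR.

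Combining the two steps: for $0\le k\le n-1$, the vector $S(k)$ consists of the top $m$ coordinates of $c_k=P_1[:,k+1]$, i.e.\ $S(k)$ is exactly the $(k+1)$-th column of $R$. Hence the matrix state $Mat_S(0)$, whose columns are the $n$ consecutive output vectors $S(0),S(1),\ldots,S(n-1)$, has precisely the columns of $R$; that is, $Mat_S(0)=R$ is the matrix formed by the first $m$ rows of $P_1$. Since $c_0\neq 0$, this is a genuine matrix state of a sequence produced by the $\sigma$-LFSR, and when the characteristic polynomial is primitive every nonzero state lies on a single cycle, so $R$ is then a matrix state of the maximal-period sequence of the $\sigma$-LFSR, as claimed.

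I expect the main obstacle to be nothing deep but purely notational: keeping straight the two conventions that coexist in the paper — $\sigma$-LFSR states as column vectors acted on from the left by $Q$, versus component-sequence states as row vectors acted on from the right by $M$ — and verifying the compatibility of the two descriptions of $c_k$ used above, namely that $c_k$ is the stack of $b$ consecutive output vectors (which comes from the identity blocks in \eqref{SU}) and that $c_0,c_1,\ldots,c_{n-1}$ are the successive columns of $P_1$ (which comes from $QP_1=P_1M$). Once these are reconciled, the equality $Mat_S(0)=R$ is read off directly from the entries, and primitivity of the characteristic polynomial is not needed for the identification itself.
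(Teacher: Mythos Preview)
Your proposal is correct and follows essentially the same route as the paper: both arguments use $QP_1=P_1M$ together with the companion structure of $M$ to deduce that the successive columns of $P_1$ are consecutive $\sigma$-LFSR states, and then read off the outputs as the top $m$ entries of each column. Your write-up is somewhat more explicit than the paper's (you spell out, via the identity blocks in \eqref{SU}, why the first $m$ coordinates of the state vector coincide with the output $S(k)$, and you note that primitivity is not needed for the identification $Mat_S(0)=R$), but the underlying idea is the same.
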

\begin{proof}
For $1\leq i \leq mb$, let the $i$-th column of $P_1$ be denoted by $c_i$. Further, for some $1\leq i \leq mb$, let $c_i$ be a state vector of the $\sigma$-LFSR.  The next state vector of the $\sigma$-LFSR is $Q\times c_i$. This vector is computed as follows
\begin{equation}
   Q\times c_i =  (P_1 \times M \times P_1^{-1}) \times c_i=P_1 \times M \times (P_1^{-1} \times c_i)=P_1 \times M \times e_1 ^ {i}= P_1\times e_{1}^{i+1}=c_{i+1}
\end{equation}
Thus, the next state vector is the next column of $P_1$. Therefore, the $n$ columns of $P_1$ are $n$ consecutive state vectors of the $\sigma$-LFSR. Consequently, these vectors' first $m$ rows constitute $m$ consecutive outputs of the $\sigma$-LFSR. Hence, the first $m$ rows of the matrix $P_1$ are a matrix state of the sequence generated by the $\sigma$-LFSR.  \qed
\end{proof}
We now discuss a couple of results related to the binary field to motivate the concept of $z$-primitive $sigma$-LFSRs.
\begin{lemma}
	If $\alpha$ is a primitive element of $GF(2^{mb})$, $\alpha^z$ is  a primitive element of $GF(2^{m})$, where $z=\frac{2^{mb}-1}{2^m-1}$. 
\end{lemma}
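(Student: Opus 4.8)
The plan is to show that $\alpha^z$ has multiplicative order exactly $2^m - 1$ in $GF(2^{mb})$, which (together with the fact that $GF(2^m)$ is the unique subfield of $GF(2^{mb})$ of order $2^m$, since $m \mid mb$) forces $\alpha^z$ to be a primitive element of $GF(2^m)$. First I would record the divisibility fact that makes $z$ an integer: since $2^m - 1$ divides $2^{mb} - 1$ (because $m \mid mb$), the quantity $z = \frac{2^{mb}-1}{2^m-1}$ is a well-defined positive integer, and moreover $z = 1 + 2^m + 2^{2m} + \cdots + 2^{(b-1)m}$.

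Next, I would compute the order of $\alpha^z$ using the standard formula: if $\alpha$ has order $N = 2^{mb} - 1$ (which it does, being primitive), then $\alpha^z$ has order $\frac{N}{\gcd(N, z)}$. Since $z$ divides $N$ by construction, $\gcd(N, z) = z$, so the order of $\alpha^z$ is $\frac{N}{z} = 2^m - 1$. This is the crux of the argument and it is essentially a one-line computation once the divisibility is in place.

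Finally, I would argue membership in $GF(2^m)$ and primitivity there. An element $\beta \in GF(2^{mb})$ lies in the subfield $GF(2^m)$ if and only if $\beta^{2^m} = \beta$, equivalently $\beta^{2^m - 1} = 1$ (for $\beta \neq 0$), i.e. the order of $\beta$ divides $2^m - 1$. Since $\alpha^z$ has order exactly $2^m - 1$, it divides $2^m - 1$, so $\alpha^z \in GF(2^m)$; and since its order equals $2^m - 1 = |GF(2^m)^\times|$, it generates the multiplicative group of $GF(2^m)$, i.e. it is a primitive element of $GF(2^m)$.

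The only mild subtlety — the "main obstacle," though it is minor — is being careful that $2^m - 1 \mid 2^{mb} - 1$ and hence that $z$ is an integer dividing $2^{mb}-1$; everything else is the routine order-of-a-power computation in a cyclic group. I would present the divisibility via the geometric-series identity $2^{mb} - 1 = (2^m - 1)(1 + 2^m + \cdots + 2^{(b-1)m})$ so that no appeal to a more general lemma is needed.
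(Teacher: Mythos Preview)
Your proposal is correct and follows essentially the same approach as the paper: both compute the order of $\alpha^z$ via the standard cyclic-group formula $\mathrm{ord}(\alpha^z)=\frac{2^{mb}-1}{\gcd(2^{mb}-1,z)}$, use the geometric-series factorization to see that $\gcd(2^{mb}-1,z)=z$, and conclude that $\alpha^z$ has order $2^m-1$ and hence generates $GF(2^m)^\times$. Your write-up is in fact slightly more careful than the paper's, since you explicitly justify why $\alpha^z$ lies in the subfield $GF(2^m)$ before asserting it is primitive there.
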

\begin{proof}
As $\alpha$ is a primitive element of $GF(2^{mb})$, $\alpha$ generates the cyclic multiplicative group $F_{2^{mb}}^*$. Therefore, $\alpha^{2^{mb}-1}=1$. Further, the cardinality of the cyclic group generated by $\alpha^z$, i.e. the order of $\alpha^z$, is $\frac{2^{mb}-1}{GCD(2^{mb}-1,z)}$. Hence, if $z=\frac{2^n-1}{2^m-1}$, 
\begin{align}
order(\alpha^z)&=\frac{2^{mb}-1}{GCD(z,2^{mb}-1)}\\
                &= \frac{(2^{m}-1)(2^{m(b-1)}+2^{m(b-2)}+\cdots+1)}{(2^{m(b-1)}+2^{m(b-2)}+\cdots+1)}\\
                &=2^m-1
\end{align}
As, the order of the element $\alpha^z$ is $2^{m}-1$, it can be said that $\alpha^z$ is the generator of the subgroup $F_{2^{m}}^*$, i.e. $\alpha^z$ is  a primitive element of $GF(2^{m})$, where $z=\frac{2^{mb}-1}{2^m-1}$. \qed
\end{proof}
This gives rise to the following corollary
\begin{corollary}
	Let $f(x)$ be a primitive polynomial of degree $n=mb$ over $F_2$ and $\alpha$ be its root. If $z=\frac{2^{mb}-1}{2^m-1}$, a polynomial $g(x)$ having degree $m$ with root $\alpha^z$ is a primitive polynomial. Further, the polynomial $g(x)$ is given as follows:
	\begin{equation}
	g(x)=(x+\alpha^z)*(x+\alpha^{2z})*\cdots*(x+\alpha^{2^{m-1}z}) \pmod{f(\alpha)}
	\end{equation} 
\end{corollary}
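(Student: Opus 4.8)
The plan is to identify $g(x)$ with the minimal polynomial of $\alpha^z$ over $\mathbb{F}_2$, and then deduce primitivity from the previous lemma together with the standard description of conjugates over a finite field.

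First I would invoke the previous lemma: since $\alpha$ is primitive in $GF(2^{mb})$ and $z=\frac{2^{mb}-1}{2^m-1}$, the element $\alpha^z$ is a primitive element of $GF(2^m)$. In particular $\mathbb{F}_2(\alpha^z)=GF(2^m)$, so the minimal polynomial $h(x)$ of $\alpha^z$ over $\mathbb{F}_2$ has degree exactly $m$; and because $\alpha^z$ generates $GF(2^m)^*$, this $h(x)$ is by definition a primitive polynomial of degree $m$. Thus it only remains to show that the displayed product equals $h(x)$.

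Next I would recall that for any $\beta\in GF(2^m)$ the conjugates of $\beta$ over $\mathbb{F}_2$ form the Frobenius orbit $\beta,\beta^2,\beta^4,\ldots$, and that the minimal polynomial of $\beta$ is $\prod_{i=0}^{d-1}(x-\beta^{2^i})$, where $d=[\mathbb{F}_2(\beta):\mathbb{F}_2]$ is the size of that orbit. Applying this with $\beta=\alpha^z$ and $d=m$ gives $h(x)=\prod_{i=0}^{m-1}(x-\alpha^{2^i z})=\prod_{i=0}^{m-1}(x+\alpha^{2^i z})$ over $\mathbb{F}_2$. The one step requiring a short argument is that the $m$ conjugates $\alpha^z,\alpha^{2z},\ldots,\alpha^{2^{m-1}z}$ are pairwise distinct: if $\alpha^{2^i z}=\alpha^{2^j z}$ for some $0\le i<j\le m-1$, then $\alpha^z$ would be fixed by the $(j-i)$-th power of the Frobenius map and hence lie in $GF(2^{j-i})$ with $j-i<m$, contradicting $\mathbb{F}_2(\alpha^z)=GF(2^m)$; equivalently, the order $2^m-1$ of $\alpha^z$ does not divide $2^{j-i}-1$ when $0<j-i<m$.

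Finally I would clarify the meaning of ``$\pmod{f(\alpha)}$'' in the statement: one expands the product $\prod_{i=0}^{m-1}(x+\alpha^{2^i z})$ formally so that each coefficient is a polynomial in $\alpha$, and then reduces each coefficient modulo $f(\alpha)$, i.e. carries out the computation inside $\mathbb{F}_2[\alpha]/(f(\alpha))\cong GF(2^{mb})$. Since the Frobenius automorphism $x\mapsto x^2$ of $GF(2^{mb})$ merely permutes the factors $x+\alpha^{2^i z}$, every coefficient of the product is fixed by Frobenius and therefore lies in $\mathbb{F}_2$; hence the reduction genuinely yields a polynomial over $\mathbb{F}_2$, which is precisely $h(x)$, and $g(x)=h(x)$ is primitive of degree $m$. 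I do not expect a real obstacle: the substance is the previous lemma plus the elementary theory of conjugates, and the only points needing care are the distinctness of the $m$ conjugates (so the degree is exactly $m$, not a proper divisor) and the correct reading of the modular-reduction notation.
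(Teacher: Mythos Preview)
Your proposal is correct and follows essentially the same approach as the paper: invoke the previous lemma for primitivity, then identify $g(x)$ with the product over the Frobenius orbit of $\alpha^z$, checking that the $m$ conjugates are distinct. Your treatment is in fact more careful than the paper's own proof, which handles the distinctness of the conjugates rather tersely and does not spell out why the coefficients land in $\mathbb{F}_2$.
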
	
\begin{proof}
The primitiveness of $g(x)$ follows from Lemma 1. Since $mb$ is the smallest exponent of $alpha$ that equals 1, ($\alpha^z$, $\alpha^{2z}$, $\alpha^{2^2z}$,\ldots, $\alpha^{2^{m-1}z})$ are all distinct. Further, if $\alpha^z$ is a root of $g(x)$, so are $\alpha^{2z}$, $\alpha^{2^2z}$,\ldots, $\alpha^{2^{m-1}z}$. Therefore, $g(x)=(x+\alpha^z)*(x+\alpha^{2z})*\cdots*(x+\alpha^{2^{m-1}z}) \pmod{f(\alpha)}$. \qed
\end{proof}

\begin{definition}
	Let S be a primitive $\sigma$-LFSR with $b$, $m$-input $m$-output delay blocks. Let its distance vector be $D=(d_1,\cdots,d_{m-1})$. If all the elements of the distance vector are divisible by $z = \frac{2^{mb}-1}{2^m-1}$,i.e $z|d_i$ $\forall$ $ 1\leq i \leq m-1$, then S is called a $z$-primitive $\sigma$-LFSR.
\end{definition}

\begin{theorem}
\cite{tan2012construction}	The number of $z-$primitive $\sigma-$LFSR of having $b$, $m$-input $m$-output delay blocks is $\frac{|GL_{m}(GF(2))|}{2^m-1} \times\frac{\phi(2^{mb}-1)}{mb}  $, where ${|GL_{m}(GF(2))|}$ is the total $m \times m$ invertible matrices over $GF(2)$  and  $\frac{\phi(2^{mb}-1)}{mb}$ is the number of primitive polynomials of degree $mb$ over $GF(2)$.
\end{theorem}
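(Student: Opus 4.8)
The plan is to separate the two layers of randomness. First fix a primitive characteristic polynomial $f(x)$ of degree $n=mb$ over $GF(2)$; I will count the $z$-primitive $\sigma$-LFSR configurations (that is, $m$-companion matrices) whose characteristic polynomial is $f(x)$, and then multiply by the number $\frac{\phi(2^{mb}-1)}{mb}$ of such polynomials, which is the classical count of primitive polynomials of degree $mb$ over $GF(2)$. Thus the entire content is the claim that for each fixed primitive $f(x)$ there are exactly $\frac{|GL_m(GF(2))|}{2^m-1}$ $z$-primitive $\sigma$-LFSRs with characteristic polynomial $f(x)$.

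Fix such an $f(x)$ with root $\alpha$ and companion matrix $M$, and identify $\mathbb{F}_2^n$ with $GF(2^{mb})$ through the basis $\{1,\alpha,\dots,\alpha^{n-1}\}$, so that $w\mapsto wM$ becomes multiplication by $\alpha$ and $e_1^n$ corresponds to a fixed nonzero element $\theta$. By Lemma~\ref{Similarity} and the discussion following Lemma~\ref{tan}, the $m$-companion matrices with characteristic polynomial $f(x)$ are in bijection with the admissible distance vectors $(d_1,\dots,d_{m-1})$, i.e.\ those arising as the distance vector of some such configuration: from a tuple $(d_1,\dots,d_{m-1})$ one forms $v_i=e_1^nM^{d_i}$, assembles $P_1$, and sets $Q=P_1MP_1^{-1}$, and $P_1$ is invertible precisely when, after dividing out the unit $\theta$, the subspace $W=\mathrm{span}_{\mathbb{F}_2}\{1,\alpha^{d_1},\dots,\alpha^{d_{m-1}}\}$ is $\alpha$-splitting, meaning $W\oplus\alpha W\oplus\cdots\oplus\alpha^{b-1}W=GF(2^{mb})$. (An $\alpha$-splitting span of $m$ vectors is automatically $m$-dimensional, so its generators are then a basis of $W$, in particular distinct.) So I must count the admissible tuples that in addition satisfy $z\mid d_i$ for every $i$.

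The key observation is that $z\mid d_i$ is equivalent to $\alpha^{d_i}\in GF(2^m)^*$: indeed $\alpha^{d_i}$ has order dividing $2^m-1$ iff $(2^{mb}-1)\mid d_i(2^m-1)$ iff $z\mid d_i$, and $d\mapsto\alpha^{d}$ sends $\{0,z,2z,\dots,(2^m-2)z\}$ bijectively onto $GF(2^m)^*$. Writing $\beta_i=\alpha^{d_i}$, a $z$-primitive configuration therefore corresponds to a tuple $(\beta_1,\dots,\beta_{m-1})\in(GF(2^m)^*)^{m-1}$ for which $W=\mathrm{span}_{\mathbb{F}_2}\{1,\beta_1,\dots,\beta_{m-1}\}\subseteq GF(2^m)$ is $\alpha$-splitting. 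Since an $\alpha$-splitting subspace has $\mathbb{F}_2$-dimension exactly $m$ while $GF(2^m)$ has dimension $m$, this forces $W=GF(2^m)$; and $GF(2^m)$ really is $\alpha$-splitting, because $\alpha$ is primitive in $GF(2^{mb})$ and hence generates $GF(2^{mb})$ over $GF(2^m)$, so $[GF(2^{mb}):GF(2^m)]=b$ yields $GF(2^{mb})=\bigoplus_{j=0}^{b-1}GF(2^m)\alpha^{j}$, which restricted to $GF(2)$-scalars is exactly the splitting decomposition for $W=GF(2^m)$. Consequently the $z$-primitive configurations with characteristic polynomial $f(x)$ are in bijection with the ordered $(m-1)$-tuples $(\beta_1,\dots,\beta_{m-1})$ for which $\{1,\beta_1,\dots,\beta_{m-1}\}$ is an $\mathbb{F}_2$-basis of $GF(2^m)$, i.e.\ with the ordered bases of $\mathbb{F}_2^m$ whose first vector is the prescribed vector $1$.

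Finally I count these: there are $|GL_m(GF(2))|$ ordered bases of $\mathbb{F}_2^m$ altogether, and since $GL_m(GF(2))$ acts transitively on the $2^m-1$ nonzero vectors, equally many ordered bases begin with each nonzero vector, so $\frac{|GL_m(GF(2))|}{2^m-1}$ of them begin with $1$. Multiplying by $\frac{\phi(2^{mb}-1)}{mb}$ gives the stated count. The step I expect to be the main obstacle is making the correspondence ``$m$-companion configuration with characteristic polynomial $f(x)$ $\leftrightarrow$ admissible distance vector'' precise, and in particular identifying ``$P_1$ invertible'' with ``the associated subspace is $\alpha$-splitting''; once that and the elementary fact that $GF(2^m)$ is itself $\alpha$-splitting are in hand, the $z$-primitive count collapses cleanly to counting bases of $GF(2^m)$.
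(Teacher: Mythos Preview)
The paper does not actually prove this theorem; it is quoted from \cite{tan2012construction} and stated without proof (a rough sketch along the same lines appears only inside a \texttt{comment} environment in the source and is not part of the compiled text). Your argument is correct and follows exactly the intended route: reduce $z\mid d_i$ to $\alpha^{d_i}\in GF(2^m)^*$, observe that the admissibility/splitting condition then forces the span $\{1,\beta_1,\dots,\beta_{m-1}\}$ to be all of $GF(2^m)$, verify that $GF(2^m)$ is $\alpha$-splitting, and count ordered bases of $GF(2^m)$ with first vector fixed to get $|GL_m(GF(2))|/(2^m-1)$ per primitive polynomial.
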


\begin{remark}\label{bis1}
    Given $m$ and $b$, the set of $z$-primitive  $\sigma$-LFSR configurations is a subset of primitive $\sigma$-LFSR configurations. For the case $b=1$, both the sets have the same cardinality (This follows from Theorem 1 and Theorem 6.3.1 in \cite{krishnaswamy2012multisequences}). Therefore, when $b=1$, every $\sigma$-LFSR configuration is a $z$-primitive  $\sigma$-LFSR configuration.
\end{remark}

\begin{definition}
The $z$-set for a pair of integers $m$ and $b$ and a primitive polynomial $f$ having degree $mb$, denoted by $z_{mb}^f$, is the set of distance vectors of $z$-primitive $\sigma$-LFSRs having $b$, $m$-input $m$-output delay blocks and characteristic polynomial $f$.
\end{definition}

Note that there is a one to one correspondence between the set of distance vectors and the set of $\sigma$-LFSRs. Therefore, the cardinality of the $z$-set for a given pair of integers $m$ and $b$ and a given primitive polynomial $f$ is equal to the number of $z$-primitive LFSR configurations with $b$, $m$-input $m$-output delay blocks having characteristic polynomial $f$.

\begin{theorem}\label{bijection}
\cite{tan2012construction}	Let $\alpha$ and $\alpha^z$ be roots of primitive polynomials  $f(x)$ and $g(x)$ having degrees $mb$ and $m$ respectively. The following map $\phi$ from $z_{mb}^f$ and $z_m^g$is a bijection.
	\begin{align*}
		\phi\colon& z_{mb}^{f}\longrightarrow z_{m}^{g}\\
		&(d_0,d_1,\cdots,d_{m-1})\mapsto(\frac{d_0}{z},\frac{d_1}{z},\cdots,\frac{d_{m-1}}{z})
	\end{align*}
\end{theorem}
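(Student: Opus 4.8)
The plan is to produce an explicit two-sided inverse. Put $\gamma=\alpha^{z}$; since $g$ is primitive of degree $m$ and $\gamma$ is a root of $g$, the element $\gamma$ is a primitive element of $GF(2^m)$, so $e\mapsto\gamma^{e}$ is a bijection from $\{0,\dots,2^m-2\}$ onto $GF(2^m)^{*}$. Because $(2^m-1)z=2^{mb}-1$, the multiples of $z$ lying in $\{0,\dots,2^{mb}-2\}$ are exactly $0,z,\dots,(2^m-2)z$, and dividing by $z$ maps them bijectively onto $\{0,\dots,2^m-2\}$. Writing a generic element of $z_{mb}^{f}$ as $(0,d_1,\dots,d_{m-1})$ (the leading entry of a distance vector is $0$ by convention) with $z\mid d_i$, and a generic element of $z_m^{g}$ as $(0,e_1,\dots,e_{m-1})$, we see that the map is injective and that its only possible inverse is $\psi\colon(0,e_1,\dots,e_{m-1})\mapsto(0,ze_1,\dots,ze_{m-1})$. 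Hence it suffices to prove that $\phi(z_{mb}^{f})\subseteq z_m^{g}$ and $\psi(z_m^{g})\subseteq z_{mb}^{f}$; the identities $\psi\circ\phi=\mathrm{id}$ and $\phi\circ\psi=\mathrm{id}$ are then immediate.

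The algebraic engine is a transitivity statement for bases in the tower $GF(2)\subset GF(2^m)\subset GF(2^{mb})$. Since $\alpha$ generates $GF(2^{mb})^{*}$ we have $GF(2^m)(\alpha)=GF(2^{mb})$, so $\alpha$ has degree $[GF(2^{mb}):GF(2^m)]=b$ over $GF(2^m)$ and $\{1,\alpha,\dots,\alpha^{b-1}\}$ is a $GF(2^m)$-basis of $GF(2^{mb})$. Consequently, for integers $e_0=0,e_1,\dots,e_{m-1}$, the set $\{\alpha^{ze_i+j}:0\le i\le m-1,\ 0\le j\le b-1\}$ (equivalently $\{\gamma^{e_i}\alpha^{j}\}$) is an $\mathbb{F}_2$-basis of $GF(2^{mb})$ if and only if $\{\gamma^{e_i}:0\le i\le m-1\}$ is an $\mathbb{F}_2$-basis of $GF(2^m)$: for ``only if'' take $j=0$ to get $m$ independent elements inside the $m$-dimensional space $GF(2^m)$; ``if'' is the standard fact that a $GF(2)$-basis of $GF(2^m)$ multiplied by a $GF(2^m)$-basis of $GF(2^{mb})$ is a $GF(2)$-basis of $GF(2^{mb})$.

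Next I would pin down the $z$-sets as sets of ``admissible bases''. By Lemma~\ref{tan} every $(0,d_1,\dots,d_{m-1})\in z_{mb}^{f}$ produces a basis $\{\alpha^{d_i+j}\}$ of $GF(2^{mb})$, so, using the equivalence above with $e_i=d_i/z$, $z_{mb}^{f}$ is contained in $\mathcal{C}_f:=\{(0,d_1,\dots,d_{m-1}):z\mid d_i,\ d_i\in\{0,\dots,2^{mb}-2\},\ \{\gamma^{d_i/z}\}\text{ a basis of }GF(2^m)\}$. Since $e\mapsto\gamma^{e}$ is a bijection onto $GF(2^m)^{*}$, $|\mathcal{C}_f|$ is the number of ordered $\mathbb{F}_2$-bases of $GF(2^m)$ with first vector $1$, namely $\prod_{i=1}^{m-1}(2^m-2^{i})=|GL_m(GF(2))|/(2^m-1)$, which does not depend on $f$. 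Summing $|z_{mb}^{f}|\le|\mathcal{C}_f|$ over all $\tfrac{\phi(2^{mb}-1)}{mb}$ primitive polynomials $f$ of degree $mb$ and comparing with the total $\tfrac{|GL_m(GF(2))|}{2^m-1}\cdot\tfrac{\phi(2^{mb}-1)}{mb}$ given by the enumeration result of \cite{tan2012construction} quoted above forces $z_{mb}^{f}=\mathcal{C}_f$ for every $f$. The identical argument with $b=1$ (where $z=1$ and, by Remark~\ref{bis1}, ``primitive'' and ``$z$-primitive'' coincide) yields $z_m^{g}=\{(0,e_1,\dots,e_{m-1}):e_i\in\{0,\dots,2^m-2\},\ \{\gamma^{e_i}\}\text{ a basis of }GF(2^m)\}$. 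The theorem follows at once: if $(0,d_1,\dots,d_{m-1})\in z_{mb}^{f}=\mathcal{C}_f$ then $\{\gamma^{d_i/z}\}$ is a basis, so $\phi(0,d_1,\dots,d_{m-1})\in z_m^{g}$; conversely if $(0,e_1,\dots,e_{m-1})\in z_m^{g}$ then $\{\gamma^{e_i}\}$ is a basis, hence so is $\{\alpha^{ze_i+j}\}$, and $\psi(0,e_1,\dots,e_{m-1})\in\mathcal{C}_f=z_{mb}^{f}$.

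The step I expect to be the real obstacle is the exact identity $z_{mb}^{f}=\mathcal{C}_f$, since Lemma~\ref{tan} supplies only the inclusion $\subseteq$. The counting argument above closes the gap once the enumeration theorem is available; alternatively one can argue the converse directly, realizing any admissible basis by a concrete $\sigma$-LFSR whose component sequences are $t\mapsto\mathrm{Tr}_{GF(2^{mb})/GF(2)}(\alpha^{d_i+t})$ and whose configuration matrix is assembled via Lemma~\ref{Similarity} and Theorem~\ref{Matrix_State}. A minor but essential bookkeeping point is that distance vectors are residues modulo the periods $2^{mb}-1$ and $2^m-1$; it is precisely this that makes coordinatewise division by $z$ a genuine bijection between the two index ranges.
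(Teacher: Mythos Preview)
The paper does not actually prove this theorem; it is quoted verbatim from \cite{tan2012construction} and used thereafter as a black box (notably in the proof of Theorem~4). Your argument is therefore an independent proof, and it is essentially correct. The tower-of-bases equivalence you isolate---that $\{\alpha^{ze_i+j}:0\le i<m,\ 0\le j<b\}$ is an $\mathbb{F}_2$-basis of $GF(2^{mb})$ if and only if $\{\gamma^{e_i}\}$ is an $\mathbb{F}_2$-basis of $GF(2^m)$---is exactly the right engine, and the bookkeeping on residues modulo $2^{mb}-1$ versus $2^m-1$ is handled correctly.

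One caution on your primary route: you close the gap between $z_{mb}^{f}$ and $\mathcal{C}_f$ by summing the inclusions $|z_{mb}^{f}|\le|\mathcal{C}_f|$ over all primitive $f$ and matching the total against the enumeration formula of Theorem~1, which is \emph{also} quoted from \cite{tan2012construction}. In that reference the count is very likely derived \emph{from} the bijection you are trying to prove, so this route risks circularity. You anticipate this and sketch the direct alternative, which is the safer path and is fully supported by the paper's own machinery: given $(0,e_1,\dots,e_{m-1})\in z_m^{g}$, your tower argument gives linear independence of $\{\alpha^{ze_i+j}\}$ directly (without invoking Theorem~2 as the paper does in its proof of Theorem~4), and then Lemma~\ref{Similarity} together with Theorem~\ref{Matrix_State} produce a $\sigma$-LFSR with characteristic polynomial $f$ and distance vector $(0,ze_1,\dots,ze_{m-1})$, which is $z$-primitive by construction. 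That establishes $\psi(z_m^{g})\subseteq z_{mb}^{f}$ without counting, and the reverse inclusion $\phi(z_{mb}^{f})\subseteq z_m^{g}$ follows symmetrically from Lemma~\ref{tan} plus the ``only if'' half of your equivalence. I would recommend presenting the direct construction as the main argument and dropping the counting detour.
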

If $\alpha$ and $\alpha^z$ are roots of primitive polynomials  $f(x)$ and $g(x)$ having degrees $mb$ and $m$ respectively, the above theorem proves that there is one to one correspondence between $z$-primitive $\sigma$-LFSRs having $b$, $m$-input $m$-output delay blocks and characteristic polynomial $f(x)$ and $\sigma$-LFSR configurations with a single $m$-input $m$-output delay block and characteristic polynomial $g(x)$ (This is because every primitive $\sigma$-LFSR configuration is a $z$-primitive $\sigma$-LFSR configuration when $b=1$.)

\section{Proposed Scheme}
In this section, we describe a method of converting the existing feedback configuration of the LFSR into a random $z$-primitive $\sigma$-LFSR configuration. In contrast to the scheme given in \cite{nandi2022key}, the characteristic polynomial in this scheme is also randomly sampled from the set of primitive polynomials of a given degree. Information about this configuration is embedded in a public parameter. At the decryption side, the configuration is recovered from the public parameter using the secret key. In the proposed scheme, in addition to sharing the secret key and calculating the initial state of the LFSR, the following two steps have to be performed before the start of communication.
\begin{itemize}
    \item Generating a random $z$-primitive $\sigma$-LFSR configuration.
    \item Generating the public parameter using the state transition matrix of the $z$-primitive $\sigma$-LFSR and the secret key and declaring it to the entire network. 
\end{itemize}
 
\subsection{Generation of $z-$primitive $\sigma$ LFSR Configuration}
In this subsection, we propose a method to generate a random configuration matrix of $z$-primitive $\sigma-$LFSR with $b$, $m$-input $m$-output delay blocks. Here, we first randomly sample a primitive polynomial $f$ of degree $mb$. We then calculate a primitive polynomial $g$ with degree $m$ such that if $\alpha$ is the root of $f$ then $\alpha^z$ is the root of $g$ where $z = \frac{2^{mb}-1}{2^m-1}$. Thereafter, we  calculate the distance vector of a primitive $\sigma$-LFSR with a single $m$-input $m$-output delay block having characteristic polynomial $g$. This distance vector is used to generate an element of $z_{mb}^f$ . This is then used to generate a $\sigma$-LFSR configuration with characteristic polynomial $f$. Given a randomly sampled polynomial $f$ of degree $mb$, this process involves the following algorithms;
\begin{itemize}
\item An algorithm to find a primitive polynomial $g$ of degree $m$.
\item An algorithm to generate an element of $z_m^g$.
\item An algorithm to generate the $\sigma$-LFSR configuration with characteristic polynomial $f$ from an element of $z_m^g$.
\end{itemize}

\begin{algorithm}[H]\label{prim}
\caption{Algorithm to find a primitive polynomial of degree $m$ }
\textbf{Input:}
\begin{enumerate}
    \item  Companion matrix, $M_f$ of a randomly sampled primitive polynomial $f$ of degree $n$ over $GF(2)$,
\end{enumerate}

	\begin{algorithmic}[1]
		\Procedure{$Find\_primitive\_deg\_m$}{$M_f$}
        \State $z\gets \frac{2^n-1}{2^m-1}$
        \State $B\gets M_f^z$
        \State Sample a random vector $v\in F_2^n$.
        \State $i \gets 0$
         \While${ \quad i \ne (m-1)}$
        \State $v_1 \gets v\times B^i$
        \State $A[i,:]\gets v_1$    
        \State $i\gets i+1$
        \EndWhile
        \State Solve the following equation for $a=(a_0,a_1,\cdots, a_{m-1})\in F_2^m$.
        \[a.A=v\times B^m\]
		\EndProcedure 
\\  \textbf{Output:}
  \begin{enumerate}
  \item A primitive polynomial $g=a_0+a_1*x+a_2*x^2+\cdots+a_{m-1}*x^{m-1}+x^m$.
  
  \end{enumerate}
  
	\end{algorithmic}
	\label{UUU}
\end{algorithm}
In the above algorithm, the matrix $M_f$ is a representation of the root of $f$. The above algorithm calculates the minimal polynomial of the matrix $M_f^z$ which according to Lemma 1 is a primitive polynomial of degree $m$. \par
We now proceed to generating a random element of $z_m^g$. Observe that for a $\sigma$-LFSR with a  single delay block to have characteristic polynomial $g$ the feedback gain matrix can be any matrix with characteristic polynomial $g$. This is used to prove the following lemma leads to an algorithm that samples an element of $z_m^g$.

\begin{lemma}\label{fullrankmatrix}
Given a primitive polynomial $g$ of degree $m$, the set of matrix states of sequences generated by $\sigma$-LFSRs with a single delay block having characteristic polynomial $g$ is the set of all full rank matrices. Further, corresponding to each such $\sigma$-LFSR configuration there exists a unique matrix state with first row $e_1^m$.
\end{lemma}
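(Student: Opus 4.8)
The plan is to establish a correspondence between single-delay-block $\sigma$-LFSR configurations with characteristic polynomial $g$ and full rank matrices, and then to pin down the representative with first row $e_1^m$. First I would recall that a $\sigma$-LFSR with a single $m$-input $m$-output delay block has configuration matrix equal to its single feedback gain matrix $B_0$, so that requiring its characteristic polynomial to be $g$ is exactly the requirement that $B_0$ be a matrix with characteristic polynomial $g$. Since $g$ is primitive (hence separable with distinct roots in $GF(2^m)$), every such $B_0$ is nonderogatory and similar to the companion matrix $M$ of $g$; conversely every matrix similar to $M$ has characteristic polynomial $g$. So the set of admissible configurations is precisely the similarity orbit of $M$.

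Next I would invoke Theorem \ref{Matrix_State} with $b=1$: for a configuration matrix $Q = P_1 M P_1^{-1}$, the first $m$ rows of $P_1$ form a matrix state of the generated sequence, and when $b=1$ the matrix $P_1$ is itself $m\times m$, so the whole of $P_1$ is a matrix state. Because $P_1$ is a similarity-transforming matrix it is invertible, i.e. full rank. This shows every matrix state arising this way is full rank. For the reverse inclusion, given any full rank $P \in \mathbb{F}_2^{m\times m}$, form $Q = P M P^{-1}$; this is an admissible configuration (characteristic polynomial $g$), and by Theorem \ref{Matrix_State} its matrix state is $P$. Hence the set of matrix states obtained over all admissible single-block configurations is exactly $GL_m(\mathbb{F}_2)$, the set of all full rank matrices.

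For the uniqueness statement, I would argue as follows. Fix an admissible configuration $Q$. By Lemma \ref{Similarity} (with $b=1$), $Q = P_1 M P_1^{-1}$ where $P_1 = [e_1^m, v_1, \ldots, v_{m-1}]$ — that is, $P_1$ is uniquely determined by $Q$ and has first row $e_1^m$ by construction. Conversely, two matrix states of the \emph{same} $\sigma$-LFSR sequence are consecutive windows of that sequence, hence related by $Mat_S(i+1) = Mat_S(i)\cdot M$ (using the companion-matrix shift relation for component sequences, with $n=m$); and a sequence of period $2^m-1$ has all these windows distinct. Among this cyclic family of $2^m-1$ full rank matrix states, I must show exactly one has first row $e_1^m$. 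This is where Lemma \ref{Similarity} does the real work: it asserts each $m$-companion matrix (here just each admissible $Q$, since $b=1$) is obtained from $M$ by a \emph{unique} $P_1$ of the stated form, whose first block-row is $e_1^m$ (and for $b=1$ this first block-row is the whole first row). So the matrix state with first row $e_1^m$ exists and is unique. I expect the main obstacle to be the careful bookkeeping that, when $b=1$, "the first $m$ rows of $P_1$" in Theorem \ref{Matrix_State} literally means all of $P_1$, and that the "first row $e_1^m$" condition in Lemma \ref{Similarity}'s parametrization matches the condition in the lemma statement; once this identification is made, existence and uniqueness are immediate from the cited lemmas, and the only genuinely new observation is that the relevant matrix states are precisely the invertible matrices.
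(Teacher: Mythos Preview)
Your argument is correct in outline and takes a genuinely different route from the paper. The paper proves the first statement by a direct construction: fixing one feedback matrix $B$ with characteristic polynomial $g$ and a nonzero vector $v$, it notes that $M_1 = [v, Bv, \ldots, B^{m-1}v]$ is invertible (since $g$ is primitive, every nonzero $v$ is cyclic for $B$); then, given an arbitrary full rank $M_2$ with the same first column $v$, it sets $P = M_2 M_1^{-1}$ and $B' = PBP^{-1}$ and verifies by a short computation that $M_2 = [v, B'v, \ldots, (B')^{m-1}v]$, so $M_2$ is a matrix state for the configuration $B'$. For the second statement the paper argues combinatorially: the first component sequence is an $m$-sequence with minimal polynomial $g$, so every nonzero $m$-tuple, in particular $e_1^m$, occurs exactly once per period as a window of that sequence, which pins down a unique time index and hence a unique matrix state with first row $e_1^m$.

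Your approach instead pulls both parts down from Lemma~\ref{Similarity} and Theorem~\ref{Matrix_State} specialised to $b=1$. This is tidier in that it reuses existing machinery, but two small points need tightening. First, Theorem~\ref{Matrix_State} as \emph{stated} assumes $P_1$ has the special Lemma~\ref{Similarity} structure (in particular first row $e_1^n$); you invoke it for an arbitrary invertible $P$. The \emph{proof} of Theorem~\ref{Matrix_State} uses only that $M$ is a companion matrix and $P_1$ is invertible, so the extension is immediate, but you should say so rather than cite the theorem verbatim. Second, your forward direction exhibits only one full-rank matrix state per configuration (namely $P_1$); to conclude that \emph{every} matrix state is full rank you still need the observation that consecutive matrix states are related by right-multiplication by the invertible matrix $M$, which you only introduce later for the uniqueness argument. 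With those two adjustments your proof goes through; the paper's version is more self-contained, while your reduction to the earlier lemmas is a legitimate and arguably more structural alternative.
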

\begin{proof}
Consider a $\sigma$-LFSR configuration with characteristic polynomial $g$ having a single delay block and feedback gain matrix $B$. The characteristic polynomial of $B$ is $g$. Given any vector $v$, the matrix $M_1 = [v,Bv,\ldots,B^{m-1}v]$ is a matrix state of this $\sigma$-LFSR configuration. As $g$ is a primitive polynomial, the matrix $M_1$ is invertible. To prove the first statement of the lemma, we will now prove that any invertible matrix having first column $v$ is a matrix state of a sequence generated by some $\sigma$-LFSR configuration with characteristic polynomial $g$ having a single delay block. 

Let $M_2$ be an arbitrary full rank matrix having first column $v$. Let $P = M_2M_1^{-1}$ and $B' = PBP^{-1}$. Comparing the first columns on both sides of the equation $PM_1 = M_2$, it is apparent that $v$ is an eigen vector of $P$ corresponding to the eigen value $1$. Now,
\begin{eqnarray*}
    M_2 = PM_1 &=& P[v,Bv,\ldots,B^{m-1}v]\\
    &=& [Pv,PBv,\ldots,PB^{m-1}v]\\
    &=& [v,PBP^{-1}Pv,\ldots,PB^{m-1}P^{-1}Pv]\\
    &=& [v,PBP^{-1}v,\ldots,PB^{m-1}P^{-1}v]\\
    &=& [v,B'v,\ldots,B'^{m-1}v]
\end{eqnarray*}
Thus, $M_2$ is a matrix state of a $\sigma$-LFSR with a feedback matrix $B'$ having characteristic polynomial $g$. Thus, any arbitrary matrix having first column $v$ is a matrix state of a $\sigma$-LFSR with characteristic polynomial $g$. Further, as $v$ can be arbitrarily chosen, any full rank matrix is a matrix state of a sequence generated by a $\sigma$-LFSR configuration with characteristic polynomial $g$ having a single delay block. Alternatively, the set of matrix states of sequences generated by $\sigma$-LFSRs with a single delay block having characteristic polynomial $g$ is the set of all full rank matrices.

Now, the minimal polynomial of each component sequence is $g(x)$. Therefore, every set of $m$-consecutive binary values occurs exactly once as a sub-sequence of these sequences in each period \cite{golomb2005signal}. Therefore, for each sequence generated by a $\sigma$-LFSR with characteristic polynomial $g$, there is a unique matrix state with the first row as $e_1^m$. As all sequences generated by a primitive $\sigma$-LFSR configuration are just shifted versions of each other, this matrix uniquely characterizes the $\sigma$-LFSR configuration.
\end{proof}
As a consequence of the above lemma, for a primitive polynomial $g(x)$ with degree $m$ and companion matrix $M_g$, given a random full rank matrix $M = [e_1^m;v_2;\ldots;v_{m-1}] \in \mathbb{F}_2^{m \times m}$, the set of integers $(d_1,d_2,\ldots,d_{m-1})$ such that $v_i = e_i^m M_g^i$ for $1 \leq i \leq m-1$ is an element of $z_m^g$. However, given an intertible matrix in $\mathbb{F}_2^{m \times m}$, calculating the set of integers $(d_1,d_2,\ldots,d_{m-1})$ is not  trivial. This is illustrated in the following lemma and corollary.
\begin{lemma}
    Let $a,b \in F_2^m$ and g be a primitive polynomial of degree $m$ over $GF(2)$. Let  $M_g\in F_2^{m \times m}$ be the companion matrix of  $g$. The calculation of the value $i$ in the equation $a\times M_g^i=b$ takes $\mathcal{O}(m\times e_K+e_K \times \sqrt{P_K}+m^3)$ time, where $2^m-1=\prod_{i=1}^{k}P_i^{e_i}$ and $P_k$  is the largest factor.
\end{lemma}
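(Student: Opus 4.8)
The plan is to recognize the equation $a\times M_g^i=b$ as an instance of the discrete logarithm problem in the multiplicative group of $GF(2^m)$, and then to solve it with the Pohlig--Hellman reduction together with the baby-step giant-step method. Since $g$ is primitive (hence irreducible), $g$ is both the characteristic and the minimal polynomial of its companion matrix $M_g$, so $F_2^m$ with the right action of $M_g$ is a faithful one-dimensional module over $F_2[x]/(g(x))\cong GF(2^m)$; identifying this module with $GF(2^m)$ acting on itself, right multiplication by $M_g$ becomes multiplication by a root $\alpha$ of $g$, which is a primitive element. Writing $\bar a,\bar b\in GF(2^m)$ for the field elements attached to $a,b$, the equation $a\times M_g^i=b$ becomes $\bar a\,\alpha^i=\bar b$, i.e. $\alpha^i=\bar b\,\bar a^{-1}$. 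As $\alpha$ generates the cyclic group $GF(2^m)^*$ of order $N=2^m-1$, the sought exponent $i$ is exactly the discrete logarithm of $\bar b\,\bar a^{-1}$ to the base $\alpha$, determined modulo $N$.

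Next I would assemble the running time. Forming the target $c=\bar b\,\bar a^{-1}$ needs one field inversion and one field multiplication; carrying out the inversion by Gaussian elimination over $F_2$ (equivalently, the extended Euclidean algorithm on polynomials) costs $\mathcal{O}(m^3)$, which supplies the $m^3$ term. Writing $N=\prod_{j=1}^{k}P_j^{e_j}$, Pohlig--Hellman recovers $i\bmod P_j^{e_j}$ by performing $e_j$ discrete logarithms in the order-$P_j$ subgroup generated by $\alpha^{N/P_j}$; each of these is solved by baby-step giant-step in $\mathcal{O}(\sqrt{P_j})$ group operations, a group operation being a (precomputed) multiplication, i.e. a vector-times-matrix product costing $\mathcal{O}(m^2)$. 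The $\mathcal{O}(k)$ subgroup generators $\alpha^{N/P_j}$ and the residual-reduction exponentiations used to descend the $e_j$ levels are obtained by repeated squaring in $\mathcal{O}(m)$ matrix operations apiece; this bookkeeping yields the $\mathcal{O}(m\,e_K)$ term. The baby-step giant-step work sums to $\mathcal{O}\bigl(\sum_{j}e_j\sqrt{P_j}\bigr)\subseteq\mathcal{O}(e_K\sqrt{P_K})$ after bounding each $\sqrt{P_j}$ by $\sqrt{P_K}$, and the closing Chinese Remainder reconstruction of $i$ from its residues costs $\mathcal{O}(m^2)$, which is absorbed into $\mathcal{O}(m^3)$. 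Adding the three contributions gives the claimed $\mathcal{O}(m\,e_K+e_K\sqrt{P_K}+m^3)$.

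The main obstacle I anticipate is precisely this accounting: one must fix exactly which quantity the symbol $e_K$ denotes (the largest exponent, or the total multiplicity $\sum_j e_j$) and which quantities are treated as $\mathcal{O}(1)$ or folded together (in particular, whether the per-operation cost $\mathcal{O}(m^2)$ is merged into the $\sqrt{P_K}$ factor, and whether $k$, the number of distinct prime factors, is negligible), so that the crude bound $\sum_j e_j\sqrt{P_j}\le e_K\sqrt{P_K}$ and the three-term form are mutually consistent. By contrast, the field-theoretic reduction of the first paragraph is routine once one uses Lemma 1 and the similarity of $M_g$ to the multiplication-by-$\alpha$ map, so essentially all of the care in the proof is in the cost model.
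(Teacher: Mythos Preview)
Your proposal is correct and follows essentially the same approach as the paper: both reduce $a\times M_g^i=b$ to a discrete logarithm in $GF(2^m)^*$ and then apply Pohlig--Hellman, with the $\mathcal{O}(m^3)$ term coming from the linear-algebraic reduction and the $\mathcal{O}(m\,e_K+e_K\sqrt{P_K})$ term from the Pohlig--Hellman step.

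The only cosmetic difference is in how the reduction is phrased. You compute the target $c=\bar b\,\bar a^{-1}$ by a field inversion; the paper instead writes down the matrix with rows $a,\,aM_g,\ldots,\,aM_g^{m-1}$ and solves the linear system $v\cdot A=b$ to obtain the coefficients $(a_0,\ldots,a_{m-1})$ of $M_g^i$ in the polynomial basis $\{I,M_g,\ldots,M_g^{m-1}\}$. These are the same computation: the solution $v$ is precisely the coordinate vector of $\bar b\,\bar a^{-1}$ in the basis $\{1,\alpha,\ldots,\alpha^{m-1}\}$, and both cost $\mathcal{O}(m^3)$ via Gaussian elimination. Your concern about the precise meaning of $e_K$ and the bookkeeping of per-operation costs is well founded---the paper does not resolve it either, simply quoting the Pohlig--Hellman complexity as $\mathcal{O}(m\,e_k+e_k\sqrt{P_k})$ without further breakdown.
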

\begin{proof}
 $M_g$ is a representation of a root of $g$ and can therefore be seen as a primitive element of $F_{2^m}$. Therefore, $\mathcal{B}=\{M_g,M_g^2,\cdots,M_g^{m-1},M_g^m\}$ forms a basis for $F_2^m$. Hence, any  $M_g^i$ can be given by a linear combination of the elements in $\mathcal{B}$. Given an equation $a \times M_g^i=b$ for $a,b \in \mathbb{F}_2^m$, the value of $M_g^i$  can be computed by solving the following linear equation for $v$
 \begin{equation} \label{lineEq}
     v \times \begin{bmatrix}
         a\\a\times M_g\\a \times M_g^2\\ \vdots \\a \times M_g^{m-1}
     \end{bmatrix}_{m\times m}=b
 \end{equation}
 If $v = (a_0,a_1,\ldots,a_{m-1}$, then $M_g^i = (a_0\times I+a_1\times M_g+\cdots+a_{m-1}\times M_g^{m-1})$. Solving Equation \ref{lineEq} takes $\mathcal{O}(m^3)$ operations. The value of $i$ can then be found from $M_g^i$ by calculating the discrete logarithm using the Pohlig Hellman algorithm\cite{teske1999pohlig}. The time complexity of  Pohlig Hellman algorithm is $\mathcal{O}(m\times e_k+e_k\times \sqrt{P_k})$, where $2^m-1=\prod_{i=1}^{k}P_i^{e_i}$ and $P_k$  is the largest factor. Therefore the total time needed to calculate the value of $i$ is $\mathcal{O}(m\times e_K+e_K \times \sqrt{P_K}+m^3)$.
\end{proof}

Observe that calculating an element of $z_m^g$ from the matrix $M$ involves solving $(m-1)$ instances of the problem discussed in the above lemma. We therefore have the following corollary.

\begin{corollary}
    Given a matrix $M=(e_1^m,v_1,\cdots,v_{m-1})\in F_2^{m \times m}$ and a companion matrix $M_g\in F_2^{m \times m}$ of a primitive polynomial $g$, Finding the distance vectors $d_1,d_2,\cdots,d_{m-1}$ using Pohlig Hellman Algorithm from the equation $e_1^m\times M_g^{d_i}=v_i$ for $i\in[1,m-1]$ takes $\mathcal{O}((m-1)\times(m\times e_K+e_K \times \sqrt{P_K}+m^3)))$ time.
\end{corollary}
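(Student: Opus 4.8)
The plan is to obtain this as an immediate consequence of the immediately preceding lemma, by noting that recovering the whole distance vector $(d_1,\dots,d_{m-1})$ splits into $m-1$ separate instances of the single-exponent recovery problem analysed there.

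First I would observe that, for each fixed $i\in\{1,\dots,m-1\}$, the equation $e_1^m\times M_g^{d_i}=v_i$ is precisely an instance of the problem $a\times M_g^{j}=b$ treated in the previous lemma, with the specialisation $a=e_1^m$ and $b=v_i$. Since $g$ is primitive, $M_g$ is a matrix representation of a primitive element of $F_{2^m}$, so for every nonzero $v_i$ a unique $d_i$ (modulo $2^m-1$) exists, and the lemma provides a procedure — solving an $m\times m$ linear system to express $M_g^{d_i}$ in the basis $\{M_g,\dots,M_g^m\}$, then applying Pohlig--Hellman — that recovers it in $\mathcal{O}\big(m\times e_K+e_K\times\sqrt{P_K}+m^3\big)$ time, where $2^m-1=\prod_{i=1}^{k}P_i^{e_i}$ and $P_K$ is the largest prime factor. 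Running this procedure once for each of the $m-1$ columns $v_1,\dots,v_{m-1}$ and adding the costs gives a total of $\mathcal{O}\big((m-1)(m\times e_K+e_K\times\sqrt{P_K}+m^3)\big)$, which is exactly the claimed bound.

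The argument has essentially no hard step: it is a ``cost of one sub-problem times the number of sub-problems'' estimate layered on top of the previous lemma, so I do not anticipate a real obstacle. The only point I would spell out carefully is well-posedness of each discrete-logarithm instance, i.e. that no $v_i$ is the zero vector; this is automatic because $M=(e_1^m,v_1,\dots,v_{m-1})$ is full rank — it is a matrix state of a $\sigma$-LFSR in the sense of Lemma~\ref{fullrankmatrix} — so every row is nonzero and lies in the cyclic group generated by $e_1^m$ under right multiplication by powers of $M_g$. I would also append a short remark that the stated bound is a mild overestimate: since $a=e_1^m$ is common to all $m-1$ instances, the $\mathcal{O}(m^3)$ linear-algebra step (assembling $[e_1^m;e_1^mM_g;\dots;e_1^mM_g^{m-1}]$ and factoring it) need be performed only once, after which each subsequent solve costs $\mathcal{O}(m^2)$; this sharpens the constant but leaves the asymptotic statement unchanged, so I would present the bound as in the corollary.
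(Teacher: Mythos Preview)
Your proposal is correct and matches the paper's approach exactly: the paper simply remarks, immediately before stating the corollary, that computing an element of $z_m^g$ from $M$ amounts to solving $(m-1)$ instances of the single-exponent problem from the preceding lemma, and then states the bound without further argument. Your additional observations about well-posedness and about sharing the $\mathcal{O}(m^3)$ linear-algebra step across instances go beyond what the paper provides but are consistent with it.
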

From Corollary 2 and Lemma 5, it is apparent that calculating an element of $z_m^g$ from a random invertible matrix is computationally expensive.
We therefore present a randomized algorithm (Algorithm \ref{UUU}) where an invertible matrix $M_1$ and the distance vector are simultaneously generated. The first row $M_1$ is taken as $e_1^m$. The algorithm runs for $m-1$ iterations. In the $i$-th iteration, an integer $d_i$  is randomly chosen from the set of integers ranging from 1 to $2^m-1$. The vector $e_1^mM_g$ is then evaluated. If this vector is linearly independent of the previously added rows of $M_1$, then $d_i$ is appended to the list of entries of the distance vector. Otherwise, the process is repeated with a new choice of $d_i$.  Thus, in each iteration a new entry is added to the distance vector and new row is appended to $M_1$. The linear independence of the newly added vector is checked by simultaneously generating a matrix $M_2$ whose entries are all 0 below the anti-diagonal. Further, for all $1\leq j \leq m$, the span of the first $j$ rows of $M_2$ is the same as the span of the first $j$ rows of $M_1$. The linear independence of the rows of $M_2$ ensures the linear independence of the rows of $M_1$.   The correctness of the algorithm is proved by Therem \ref{Correctnesszset}. Moreover, the proof gives an insight into the working of the algorithm.
  
\begin{algorithm}[H]
\caption{Randomized Algorithm to generate a z-set for $g$}
\textbf{Input:}
\begin{enumerate}
    \item  A companion matrix $M_g\in\mathbb{F}_2^{m \times m}$ of primitive polynomial $g$.
\end{enumerate}

	\begin{algorithmic}[1]
		\Procedure{$Z-Set$}{$M_g$}
		\State $M_1 \gets \begin{bmatrix}
		e_1^m\\
		z\\
		z\\
		\vdots\\
		z
		\end{bmatrix}\quad M_2\gets \begin{bmatrix}
		e_1^m\\
		z\\
		z\\
		\vdots\\
		z
		\end{bmatrix}\in 
        \mathbb{F}_2^{m \times m}$\Comment{z is zero vector of dimension m}
        \State $i \gets2$
         \While${\quad i \ne m}$
        \State $d \gets Random-Integer(2,2^m-1)$
        \State $ v \gets e_1^m \times M_g^d$
        \State $M_1[i,:]\gets v$
        \State $M_2[i,:]\gets v$
        \State $j \gets 1$
        \State \While${\quad j \ne i}$
        \State $M_2[i,:] \gets M_2[i,:]+M_2[i,m-j+1]*M_2[j,:]$
        \State $j\gets j+1$
         \EndWhile
        
            \State $\textbf{if}\quad M_2[i,m-i+1]==1$
            \State $\quad List.add(d)$
            \State ${\quad i\gets i+1}$
            \State $\textbf{endif}$
        
         \EndWhile
		\EndProcedure 
\\  \textbf{Output:}
  \begin{enumerate}
  \item Full rank Matrix $M_1\in F_2^{m \times m}$.
  \item The distance vector, List.
  
  \end{enumerate}
  
	\end{algorithmic}
	\label{UUU}
\end{algorithm}
\par
The main computational challange in the above algorithm is to find the value  $M_g^d$. This can be done using the Binary Exponentiation Algorithm(BEA)\cite{4565086} with matrix $M_g$ and integer $d$ as inputs. Using BEA,  for any $d \in [1,2^m-1]$, $M_g^d$ can be calculated in $\mathcal{O}(m^2\log_{2}(m))$ time. Here, $\mathcal{O}(m^2)$ time is required for multiplying two matrices of size $m \times m$ which is given in Algorithm \ref{MM}.
\begin{algorithm}[H]
\caption{Matrix-Matrix Multiplication over Binary Field}
\label{MM}
\begin{algorithmic}[1]
\Procedure{Matrix-Multiplication}{A,B}
\State \For{$i=0$ to $N$}
\State $Sum=0$
\State\For{$j=0$ to $N$}
\State  $Sum+=((\textbf{COUNTSETBITS}((A[i]\&\&B[j])))\&\& 1)\times 2^{N-1-j}$
\State \EndFor
\State $M[i]=Sum$
\State \EndFor
\State \EndProcedure
\end{algorithmic}
\end{algorithm}
The $COUNTSETBITS(x)$ function in the above algorithm counts the number of ones in $(A[i,:] \&\& B[:,j])$. The output of the bitwise AND of the COUNTSETBITS function with the integer 1 tells us the number is even or odd. It gives integer 1 when the count is odd and gives 0 when the count is even. It runs in $\mathcal O(1)$ time. The Algorithm \ref{CS} for COUNTSETBITS is given below:
\begin{algorithm}[H]
	\caption{Counting Set Bits in a Vector $\in F_2^{32}$}
	\label{CS}
	\begin{algorithmic}[1]
		\Procedure{COUNTSETBITS}{V}
		\State\For{$i=0$ to $255$}
		\State $V[i] \gets (i \&\& 1)+ Btable[\frac{i}{2}]$
		\State \EndFor
		\State $Result \gets(Btable[V \& 0xff]+Btable[(V>>8)\& 0xff]+ Btable[(V>>16)\& 0xff]+Btable[(V>>24)])\& 1$
		\State \EndProcedure
	\end{algorithmic}
\end{algorithm}

\par

\begin{theorem}\label{Correctnesszset}
    Given a primitive polynomial $g$ of degree $m$ over $GF(2)$, Algorithm 2 generates an element  of the $z_m^g$.
\end{theorem}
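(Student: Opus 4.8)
The plan is to show that when Algorithm 2 terminates, the list of integers $(d_1,\ldots,d_{m-1})$ it has produced is exactly the distance vector of some $z$-primitive $\sigma$-LFSR with a single $m$-input $m$-output delay block and characteristic polynomial $g$. By Remark \ref{bis1}, every primitive $\sigma$-LFSR configuration with $b=1$ is automatically $z$-primitive, so it suffices to exhibit a primitive $\sigma$-LFSR configuration (equivalently a full-rank matrix state with first row $e_1^m$) whose distance vector is the output list. By Lemma \ref{fullrankmatrix}, such configurations correspond bijectively to full-rank matrices with first row $e_1^m$. Hence the entire statement reduces to two claims about the matrix $M_1$ built by the algorithm: (i) $M_1$ has first row $e_1^m$ and is full rank; and (ii) for each $i$, the $i$-th row of $M_1$ equals $e_1^m M_g^{d_{i-1}}$, so that the output list is precisely the distance vector associated to $M_1$ via $v_i = e_1^m M_g^{d_i}$.

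First I would establish (ii), which is immediate from the assignment $v \gets e_1^m \times M_g^d$ followed by $M_1[i,:] \gets v$: a row is appended to $M_1$ only together with the integer $d$ that generated it, and $d$ is added to \texttt{List} in the same conditional block, so the correspondence between rows of $M_1$ and entries of the output list is maintained as a loop invariant. Combined with the first row being initialized to $e_1^m$, this shows the output list, if $M_1$ is full rank, is the distance vector of the $\sigma$-LFSR configuration determined by $M_1$ through Lemma \ref{fullrankmatrix}.

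The substantive part is (i): proving that the acceptance test \texttt{if }$M_2[i,m-i+1]==1$ correctly certifies that the newly generated vector $v$ is linearly independent of the previously accepted rows of $M_1$, and hence that $M_1$ is full rank on termination. Here I would argue by induction on $i$ the loop invariant that $M_2$ is obtained from $M_1$ by left-multiplication by an invertible lower-triangular-type matrix (the inner \texttt{while} performs exactly the elimination $M_2[i,:] \mathrel{+}= M_2[i,m-j+1]\,M_2[j,:]$ for $j=1,\ldots,i-1$), so that for every $j$ the span of the first $j$ rows of $M_2$ equals the span of the first $j$ rows of $M_1$; and that the accepted rows of $M_2$ are in echelon form with respect to the anti-diagonal, i.e. row $k$ has its pivot $1$ in column $m-k+1$ and zeros in columns $m-1,m-2,\ldots,m-k+2$ to its left among the pivot positions. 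Under this invariant, after the inner loop clears the previously accepted pivot columns from the new candidate row, the entry $M_2[i,m-i+1]$ is nonzero if and only if the cleared candidate is not the zero vector, which happens if and only if $v$ was not in the span of the earlier rows of $M_1$. Thus the test accepts exactly the linearly independent candidates, the loop runs until $m-1$ independent rows are collected, and $M_1$ is full rank.

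The main obstacle I anticipate is bookkeeping the anti-diagonal echelon structure precisely: the algorithm pivots along the anti-diagonal (column $m-i+1$ for row $i$) rather than the main diagonal, and the inner loop subtracts multiples of \emph{all} earlier rows $j=1,\ldots,i-1$ rather than only accepted ones — so one must check that rows of $M_2$ sitting in not-yet-filled slots are zero and therefore harmless, and that each accepted row, once placed, is never disturbed because its pivot column lies strictly to the right of every later pivot column. A secondary point requiring care is termination: one should remark that since $M_g$ is primitive, $\{M_g^d : 1 \le d \le 2^m-1\}$ ranges over all nonzero elements of $F_2^m$ viewed as $F_{2^m}$, so at each stage there exist values of $d$ producing a vector outside the current span, and the random choice hits one almost surely, giving termination with probability $1$ (and finite expected running time).
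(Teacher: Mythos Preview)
Your approach is essentially the same as the paper's: reduce to full-rankness of $M_1$ via Lemma~\ref{fullrankmatrix}, then maintain by induction the anti-diagonal echelon invariant on $M_2$ to certify that each accepted row is linearly independent of the previous ones.

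One small overstatement to fix: the test $M_2[i,m-i+1]=1$ does \emph{not} accept exactly the linearly independent candidates. After the inner loop clears columns $m,m-1,\ldots,m-i+2$, the residual row can be nonzero yet have a $0$ in column $m-i+1$ (e.g.\ take $v$ with zeros in columns $m-i+1,\ldots,m$ and a $1$ elsewhere; the reduction leaves $v$ unchanged and the test rejects it although $v$ is independent of the earlier rows). So your ``if and only if'' should be weakened to the one direction ``accepted $\Rightarrow$ independent'', which is all that is needed for the theorem and is exactly what the paper proves. Correspondingly, your termination argument needs a slight strengthening: it is not enough that vectors outside the current span exist; you need $d$ with the stronger property that the post-reduction entry in column $m-i+1$ equals $1$. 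This entry is a nonzero linear functional of $v$ (evaluate it on the standard basis vector with a $1$ in position $m-i+1$ and zeros in the pivot columns), hence equals $1$ on $2^{m-1}$ vectors, and since $\{e_1^m M_g^d\}$ ranges over all nonzero vectors such $d$ exist at every stage. The paper itself does not address termination.
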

\begin{proof}
As a result of Lemma \ref{fullrankmatrix}, the above theorem stands proved if it is proved that the matrix $M_1$ is invertible. This is proved by inductively proving that each new row added to $M_1$ is linearly independent of all its previous rows. We now inductively prove the following statements;
\begin{itemize}
\item For all $1\leq i\leq m$, the first $i$ rows of $M_2$ are linear combinations of the first $i$ rows of $M_1$. Further, $M_2(i,m-i+1)=1$ and $M_2(i,k) = 0$ for all $k> m-i+1$.
\item  For all $1\leq i\leq m$, the first $i$ rows of $M_1$ are linearly independent.
\end{itemize}
As the first rows of $M_1$ and $M_2$ are $e_1^m$, both these statements are trivially true when $i=1$.
Assume that they are true for all $i<\ell<m$. 

Observe that, when $i=\ell$, due to lines 7 and 8 of Algorithm \ref{UUU}, $M_2[\ell,:]$ is initially equal to $M_1[\ell,:]$. $M_2[\ell,:]$ is then modified in the subsequent while loop (Line 11 to Line 15 of Algorithm \ref{UUU}) by adding it with some of the previous rows of $M_2$. By assumption, the first $\ell-1$ rows of $M_2$ are linear combinations of the first $\ell-1$ rows of $M_1$. Hence, at the end of this while loop, $M_2(\ell,:)$ is a linear combination of the first $\ell$ rows of $M_1$. Further, by assumption $M_2(j,m-j+1)=1$ and $M_2(j,k)=0$ for all $j<\ell$ and $k>m-j+1$. Therefore, when $i=\ell$ and $j = p<\ell$, in Line 12 of Algorithm \ref{UUU}, the $p$-th row of $M_2$ (whose $m-p+1$-th entry is 1) is added to $M(\ell,:)$ if and only if $M(i,m-p+1) = 1$. Therefore, after this addition $M(i,m-p+1)$ becomes zero. Further, all the entries of $M(\ell,:)$ that have been made zero in the previous iterations of the while loop remain zero as $M(p,k)=0$ for all $k>m-p+1$. Therefore, at the end of the inner while loop $M(\ell,k) = 0$ for all $k>m-\ell +1$. Now, the value of $i$ is incremented only when 
$M_2(\ell,m-\ell +1) = 1$. Therefore,  $M_2(\ell,m-\ell+1)=1$ and $M_2(\ell,k) = 0$ for all $k> m-\ell+1$.

The structure of the first $\ell$ rows of $M_2$ ensures that these rows are linearly independent. Further, since these rows are a linear combinations of the first $\ell$ rows of $M_1$, the first $\ell$ rows of $M_1$ are linearly independent.Thus both statements are true when $i=\ell$ and the theorem stands proved.\qed

\end{proof}
The element of $z_m^g$ generated in Algorithm-2 is then used to find an element of $z_{mb}^f$ using the bijection mentioned in Theorem 2. This is then used to generate the desired $\sigma$-LFSR configuration in the following algorithm.
\begin{algorithm}[H]
	\caption{Generation of M-companion matrix of Z primitive $\sigma-$ LFSR}
 \label{MAlgo}
	\begin{algorithmic}[1]
		\State \textbf{Input :} \begin{enumerate}
  \item Distance vector $D=(d_1,d_2,\cdots, d_{m-1})$(Generated from Algorithm-2). 
     \item Primitive polynomials $f$ of degree $n(mb)$ that was considered  in Algorithm 1. 
		\end{enumerate}

		\Procedure{\textbf{CONFIG-GEN}}{$INV_m,f,g$}      
.
		\State Compute $z \leftarrow\frac{2^{mb}-1}{2^m-1}$
		\State Construct the subspace $V_f$ of $F_2^{mb}$ by following:
		\[V_f=(e_1^n,e_1^n*M_f^{z*d_1},e_1^n*M_f^{z*d_2},\cdots,e_1^n*M_f^{z*d_{m-1}})=(e_1^n,w_1,\cdots,w_{m-1})\]\Comment{where $M_f$ is the companion matrix of the polynomial $f$}
	
		\State $Q \gets \begin{bmatrix} 
		e_1^n\\                        
		w_1\\
		\vdots\\
		w_{m-1}\\
		e_1^n\times M_f\\
		w_1 \times M_f\\
		\vdots\\
		w_{m-1} \times  M_f\\
		\vdots\\
		e_1^n\times \left( M_f\right)^{b-1}\\
		w_1 \times \left( M_f\right)^{b-1}\\
		\vdots\\
		w_{m} \times \left( M_f\right)^{b-1}
		\end{bmatrix}\in \mathbb{F}_2^{n \times n}$ 
		\State $A_{mb}\leftarrow Q \times M_f \times Q^{-1} $
		\EndProcedure
	\State \textbf{Output :} $M-$companion matrix $A_{mb}$ over $M_m(F_2)$. 	
	\end{algorithmic}
\end{algorithm}
\par

\begin{theorem}
Algorithm-\ref{MAlgo}  generates the configuration matrix of a $z-$primitive $\sigma-$LFSR whose characteristic polynomial is the primitive polynomial $f$ considered in Algorithm \ref{prim}. 
\end{theorem}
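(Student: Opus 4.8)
The plan is to verify in turn the three defining properties of the claimed object: that $Q$ is invertible (so that $A_{mb}=Q\times M_f\times Q^{-1}$ is well defined and similar to $M_f$), that $A_{mb}$ has the $m$-companion shape of the configuration matrix in \eqref{SU} and has characteristic polynomial $f$, and that the distance vector of the resulting $\sigma$-LFSR has every entry divisible by $z$. I would first settle invertibility of $Q$. By Theorem~\ref{Correctnesszset} the tuple $(d_1,\dots,d_{m-1})$ returned by Algorithm~2 lies in $z_m^{g}$, where $g$ is the degree-$m$ primitive polynomial produced by Algorithm~\ref{prim}; since $g$ is, by construction, the minimal polynomial of $M_f^{z}$, the element $\alpha^{z}$ is a root of $g$ whenever $\alpha$ is the root of $f$ represented by $M_f$. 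Running the bijection $\phi$ of Theorem~\ref{bijection} backwards then gives $(zd_1,\dots,zd_{m-1})\in z_{mb}^{f}$, so this tuple is the distance vector of a genuine $z$-primitive $\sigma$-LFSR with $b$, $m$-input $m$-output delay blocks and characteristic polynomial $f$. Hence Lemma~\ref{tan} and the discussion following it apply with the nonzero vector $e_1^{n}$ and the matrix $M_f$: the rows of $Q$, namely $e_1^{n}M_f^{j}$ and $w_iM_f^{j}=e_1^{n}M_f^{zd_i+j}$ for $0\le j\le b-1$ and $1\le i\le m-1$, form a basis of $\mathbb{F}_2^{mb}$, and therefore $Q$ is invertible.

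Next I would read the block structure of $A_{mb}$ off directly. Label the rows of $Q$ as $r_1,\dots,r_n$ in the order in which they are listed in Algorithm~\ref{MAlgo}; by construction $r_{k+m}=r_kM_f$ for $1\le k\le n-m$. For such a $k$, the $k$-th row of $A_{mb}$ equals $e_1^{k}A_{mb}=e_1^{k}QM_fQ^{-1}=r_kM_fQ^{-1}=r_{k+m}Q^{-1}=e_1^{k+m}$; grouping these rows into blocks of size $m$ recovers exactly the block rows $[\,0\ I\ 0\ \cdots\ 0\,],[\,0\ 0\ I\ \cdots\ 0\,],\dots$ appearing in \eqref{SU}, while the remaining $m$ rows $r_{n-m+i}M_fQ^{-1}$ (for $1\le i\le m$) form the feedback block $[\,B_0\ B_1\ \cdots\ B_{b-1}\,]$. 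Thus $A_{mb}$ is an $m$-companion (configuration) matrix; this is also the content of Lemma~\ref{Similarity}, since $Q$ has the form of the matrix $P_1$ there with $v_i=w_i$. Being similar to $M_f$ via $Q$, the matrix $A_{mb}$ has the same characteristic polynomial as $M_f$, namely the primitive polynomial $f$.

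Finally I would identify the distance vector and conclude. Because $Q$ has the form of $P_1$ in Lemma~\ref{Similarity}, Theorem~\ref{Matrix_State} shows that the first $m$ rows of $Q$, namely $e_1^{n},w_1,\dots,w_{m-1}$, are a matrix state of the sequence generated by the $\sigma$-LFSR with configuration matrix $A_{mb}$, i.e. the simultaneous state vectors of its component sequences $S^1,\dots,S^m$. Since advancing a component sequence by one step amounts to right multiplying its state vector by the companion matrix $M_f$ of $f$, the identity $w_i=e_1^{n}M_f^{zd_i}$ forces $S^{i+1}=\delta^{zd_i}S^1$; hence the distance vector of $A_{mb}$ is $(zd_1,\dots,zd_{m-1})$, each of whose entries is divisible by $z$, so $A_{mb}$ is by definition the configuration matrix of a $z$-primitive $\sigma$-LFSR. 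Combined with the previous paragraph, this proves the theorem. I do not expect a genuine obstacle here: the one point requiring care is the bookkeeping that links Algorithm~\ref{prim}, Algorithm~2 and Theorem~\ref{bijection} so that $(zd_1,\dots,zd_{m-1})$ is certified to lie in $z_{mb}^{f}$; once that is in hand, invertibility of $Q$, the $m$-companion shape, and $z$-primitivity all follow mechanically from the results of Section~2.
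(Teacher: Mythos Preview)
Your proposal is correct and follows essentially the same route as the paper: certify that $(zd_1,\dots,zd_{m-1})\in z_{mb}^f$ via Theorem~\ref{bijection}, deduce invertibility of $Q$ from Lemma~\ref{tan}, invoke Lemma~\ref{Similarity} for the $m$-companion shape and characteristic polynomial, and then use Theorem~\ref{Matrix_State} to read off the distance vector. Your argument is in fact more explicit than the paper's in two places---you spell out why $(d_1,\dots,d_{m-1})\in z_m^g$ by citing Theorem~\ref{Correctnesszset}, and you verify the block structure of $A_{mb}$ directly via $r_{k+m}=r_kM_f$ rather than only citing Lemma~\ref{Similarity}---but these are elaborations, not a different strategy.
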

\begin{proof}
The set $D=\{d_1,\cdots,d_{m-1}\}$ is an element of $z_m^g$. Therefore, by Theorem 2 the set $D' = \{zd_1,\ldots,zd_{m-1}\}$ is an element of $z_{mb}^f$. Therefore, by the arguments given in Section 2, the vectors $(e_1^n,w_1,\ldots,w_{m-1}, e_1^nM_f,w_1M_f,\cdots,w_{m-1}M_f^{b-1},\\\ldots,e_1^nM_f^{b-1},w_1M_f^{b-1},\cdots,w_{m-1}M_f^{b-1})$ are linearly independent. Hence, by Lemma \ref{Similarity}, the matrix $QM_fQ^{-1}$ is an $m$-companion matrix with characteristic polynomial $f$. 

Now, by Theorem \ref{Matrix_State}, the first $m$-rows of the matrix $Q$ constitute a matrix state of the sequence generated by a $\sigma$-LFSR with configuration matrix $QM_fQ^{-1}$. Therefore, the distance vector of this $\sigma$-LFSR is $\{zd_1,\cdots,zd_{m-1}\}$. As each entry of this vector is a multiple of $z$, the $\sigma$-LFSR is $z$-primitive.

\end{proof}
\par
\textbf{Complexity Analysis:-}
Each $M_f^{z\times d_i}$ in Step 4 of Algorithm 4 can be calculated in $\mathcal{O}(n^2log_{2}(n))$ time. Since this operation is done $(n-1)$-times, the computational complexity of Step 4 is $\mathcal{O}(n^3)log(n)$. Further, $Q\times M_f\times Q^{-1}$ can be calculated in $\mathcal{O}(n^3)$ time. Thus, the overall time complexity of Algorithm 4 is $\mathcal{O}(n^3log_{2}(n))$.


\begin{example}
For the case where $m=3$ and $b=3$, we aim to generate a $z$-primitive $\sigma$-LFSR configuration with characteristic polynomial $f(x)=x^{9} + x^{6} + x^{4} + x^{3} + x^{2} + x + 1$. The following is the companion matrix of $f(x)$
\[M_f=\left(\begin{array}{rrrrrrrrr}
0 & 0 & 0 & 0 & 0 & 0 & 0 & 0 & 1 \\
1 & 0 & 0 & 0 & 0 & 0 & 0 & 0 & 1 \\
0 & 1 & 0 & 0 & 0 & 0 & 0 & 0 & 1 \\
0 & 0 & 1 & 0 & 0 & 0 & 0 & 0 & 1 \\
0 & 0 & 0 & 1 & 0 & 0 & 0 & 0 & 1 \\
0 & 0 & 0 & 0 & 1 & 0 & 0 & 0 & 0 \\
0 & 0 & 0 & 0 & 0 & 1 & 0 & 0 & 1 \\
0 & 0 & 0 & 0 & 0 & 0 & 1 & 0 & 0 \\
0 & 0 & 0 & 0 & 0 & 0 & 0 & 1 & 0
\end{array}\right)\]
 \begin{enumerate}
\setlength\itemsep{1em}
\item  $z=\frac{2^9-1}{2^3-1}=73$ 
\item $g(x)=(x+\alpha^{73})*(x+\alpha^{73*2})*(x+\alpha^{73*4})\pmod{f(x)}=x^3 + x^2 + 1$
, where $\alpha$ is root of $f(x)$.
\item Randomly sample a matrix from linear group $GL(3,GF(2))$. Let the sampled matrix be \[Mat=\left(\begin{array}{rr|r}
0 & 0 & 1 \\
\hline
 1 & 0 & 0 \\
0 & 1 & 0
\end{array}\right)\]

\item The distance vector corresponding to the above matrix is 
 $\{6,5\}$
\item The subspace $V_f$ in Step 4 of Algorithm 4 is as follows:
\[V_f=\{e_1^{9},e_1^{9}*M_f^{6},e_1^{9}*M_f^{5}\}\]

\item The matrix $Q$ in Step 5 of Algorithm 4 is as follows
\[Q=\left(\begin{array}{rrrrrrrrr}
0 & 0 & 0 & 0 & 0 & 0 & 0 & 0 & 1 \\
0 & 1 & 1 & 0 & 0 & 0 & 1 & 0 & 1 \\
1 & 0 & 0 & 0 & 0 & 1 & 0 & 1 & 1 \\
0 & 0 & 0 & 0 & 0 & 0 & 0 & 1 & 0 \\
1 & 1 & 0 & 0 & 0 & 1 & 0 & 1 & 1 \\
0 & 0 & 0 & 0 & 1 & 0 & 1 & 1 & 1 \\
0 & 0 & 0 & 0 & 0 & 0 & 1 & 0 & 0 \\
1 & 0 & 0 & 0 & 1 & 0 & 1 & 1 & 0 \\
0 & 0 & 0 & 1 & 0 & 1 & 1 & 1 & 0
\end{array}\right)\]
\item The configuration matrix of the $z$-primitive $\sigma$-LFSR is calculated as follows,
\[Q \times M_f \times Q^{-1}=\left(\begin{array}{rrrrrrrrr}
0 & 0 & 0 & 1 & 0 & 0 & 0 & 0 & 0 \\
0 & 0 & 0 & 0 & 1 & 0 & 0 & 0 & 0 \\
0 & 0 & 0 & 0 & 0 & 1 & 0 & 0 & 0 \\
0 & 0 & 0 & 0 & 0 & 0 & 1 & 0 & 0 \\
0 & 0 & 0 & 0 & 0 & 0 & 0 & 1 & 0 \\
0 & 0 & 0 & 0 & 0 & 0 & 0 & 0 & 1 \\
1 & 0 & 1 & 1 & 0 & 1 & 0 & 1 & 0 \\
1 & 0 & 0 & 1 & 0 & 0 & 0 & 0 & 1 \\
0 & 1 & 0 & 0 & 1 & 0 & 1 & 1 & 0
\end{array}\right)=\left(\begin{array}{rrr|rrr|rrr}
0 & 0 & 0 & 1 & 0 & 0 & 0 & 0 & 0 \\
0 & 0 & 0 & 0 & 1 & 0 & 0 & 0 & 0 \\
0 & 0 & 0 & 0 & 0 & 1 & 0 & 0 & 0 \\
\hline
 0 & 0 & 0 & 0 & 0 & 0 & 1 & 0 & 0 \\
0 & 0 & 0 & 0 & 0 & 0 & 0 & 1 & 0 \\
0 & 0 & 0 & 0 & 0 & 0 & 0 & 0 & 1 \\
\hline
 1 & 0 & 1 & 1 & 0 & 1 & 0 & 1 & 0 \\
1 & 0 & 0 & 1 & 0 & 0 & 0 & 0 & 1 \\
0 & 1 & 0 & 0 & 1 & 0 & 1 & 1 & 0
\end{array}\right)\]

\item The gain matrices for the configuration matrix ($Q \times M_f \times Q^{-1}$) are follows
$B_0$=
$\begin{bmatrix}
1 & 0 & 1 \\
1 & 0 & 0 \\
0 & 1 & 0 
\end{bmatrix}$\quad
$B_1$=
$\begin{bmatrix}
1 & 0 & 1 \\
1 & 0 & 0 \\
0 & 1 & 0 
\end{bmatrix}$ \quad
$B_2$=
$\begin{bmatrix}
0 & 1 & 0 \\
0 & 0 & 1 \\
1 & 1 & 0 
\end{bmatrix}$

\end{enumerate}
\end{example}

\subsection{Generation of the Public Parameter}

Recall that in the generation of the configuration matrix, the primitive polynomial in Algorithm \ref{prim} and the elements of the distance vector in Algorithm \ref{UUU} are randomly chosen independent of the key. Hence, the information needed to recover the $\sigma$-LFSR configuration is not completely contained in the key. Therefore, in addition to the key, we generate a publicly known parameter matrix $C$ to enable the reciever to recover the $\sigma$-LFSR configuration. 

Note that the proposed scheme has two secret keys $K_1, K_2\in F_2^{128}$ and two initialization vectors $IV_1,IV_2\in F_2^{128}$.  We now state the algorithm that generates the public matrix $C$ followed by a brief explanation.
\begin{algorithm}[H] \label{Public_Matrix}
	\caption{Generation of Public matrix on the Sender Side}
	\begin{algorithmic}[1]
		\State \textbf{Input: }
		\State Two random keys $K_1$, $K_2\in F_2^{128}$.
            \State Two public initialization vectors $IV_1$ and $IV_2\in F_2^{128}$.
		\Procedure{$\textbf{GEN\_CMatrix}$}{$f,K_1,K_2,IV_1,IV_2$}
 		\State Take the Configuration matrix $M_1$ from Algorithm-4 with m=32 and        b=16.
 	  \State Generate the vector $v_1\in F_2^{512}$  using the initial state generation algorithm of SNOW 2.0 with $(K_1,IV_1)$. Let $v_1 \leftarrow v_{1,1}||v_{1,2}||v_{1,3}||v_{1,4}$ where $v_{1,i}\in F_2^{128}$.
 		    \State   $U[1,:]\leftarrow v_1$.
 			\State \For{$i=2$ to $512$}
 		    \State  $v_{i,k}\leftarrow AES-128(v_{{i-1},k},K_2,IV_2)$\Comment{$k \in[4]$}
 			\State  $v_i\leftarrow v_{i,1}||v_{i,2}||v_{i,3}||v_{i,4}$
                \State  $U[i,:]\leftarrow v_i$
 			\State  \EndFor
 			\State  $a \leftarrow2^{512}-1$.
                \State \For{$i=0$ to $511$}
                \State $Mask[i]\leftarrow (a>>i)$
                \State \EndFor
                \State \For{$i=0$ to $511$}
                \State $U[i,:] \leftarrow (U[i] \&\& Mask[i])$
                \State $U[i,:]\leftarrow U[i,:] \oplus (1<<(511-i))$
                \State \EndFor
                \State $L \leftarrow U^{T}$
                \State $S\leftarrow L \times U$
 		    \State Compute $C=M_1[480:512,:] \times S$.
 		\EndProcedure
 		
 		\State Output: $C \in F_2^{32 \times 512}$.
	\end{algorithmic}
\end{algorithm}
In the above algorithm, the secret key $K_1,K_2$ is used to generate a secret matrix $S$ which is multiplied with the last $32$ rows of the M-companion matrix, $M_1$, to produce the matrix $C$. This $C$ is then made public. In order to generate a matrix $M$, a vector $v_1$ is generated using the (key,IV) pair $K_1,IV_1$. This is done using the procedure that is used to generate the initial state of SNOW 2.0 and SNOW 3G. $v_1$ is then divided into 4 equal parts of $128$ bits. Each of this part is encrypted using AES-128 algorithm using the key IV pair $K_2,IV_2$ pair. The encrypted words are then concatenated to generate the next vector $v_2$. This procedure is then followed recursively to generate 512 vectors. This vectors are considered as a rows of a matrix. The elements of the matrix below the diagonal are then discarded and and all the diagonal elements are made 1. This results an upper triangular matrix $U$. The matrix $S$ is then generated by multiplying the transpose of $U$ with $U$. That is $S=U \times L$. 

\begin{algorithm}[H]
	\caption{Recovery of Public matrix on the Receiver Side}
	\begin{algorithmic}[1]
		\State \textbf{Input: }
		\State Take two random keys $K_1$, $K_2\in F_2^{128}$.
            \State Take two public initialization vectors $IV_1$ and $IV_2\in F_2^{128}$.
		\Procedure{$\textbf{GEN\_CMatrix}$}{$f,K_1,K_2,IV_1,IV_2$}

 	  \State Generate the vector $v_1\in F_2^{512}$  using the initial state generation algorithm of SNOW 2.0 with $(K_1,IV_1)$. Let $v_1 \leftarrow v_{1,1}||v_{1,2}||v_{1,3}||v_{1,4}$ where $v_{1,i}\in F_2^{128}$.
 		    \State   $U[1,:]\leftarrow v_1$.
 			\State \For{$i=2$ to $512$}
 		    \State  $v_{i,k}\leftarrow AES-128(v_{{i-1},k},K_2,IV_2)$\Comment{$k \in[4]$}
 			\State  $v_i\leftarrow v_{i,1}||v_{i,2}||v_{i,3}||v_{i,4}$
                \State  $U[i,:]\leftarrow v_i$
 			\State  \EndFor
 			\State  $a \leftarrow2^{512}-1$.
                \State \For{$i=0$ to $511$}
                \State $Mask[i]\leftarrow (a>>i)$
                \State \EndFor
                \State \For{$i=0$ to $511$}
                \State $U[i,:] \leftarrow (U[i] \&\& Mask[i])$
                \State $U[i,:]\leftarrow U[i,:] \oplus (1<<(511-i))$
                \State \EndFor
                \State $L \leftarrow U.transpose()$
 		    \State Compute $S= U \times L$.
                \State Compute the inverse of $S$.
 		\EndProcedure
 		
 		\State Output: $C \times S^{-1} \in F_2^{32 \times 512}$.
	\end{algorithmic}
\end{algorithm}

 The receiv  computes the matrix $S$ from ($K_1, K_2$). The feedback gain matrices $B_i$ are recovered by multiplying the public matrix $C$(Generated in Algorithm-5) with the inverse of $S$.  Thus, the receiver regenerates the LFSR configuration from the keys ($K_1,K_2$) and the public matrix $C$\par 
Whenever a user intends to transmit confidential data using an LFSR-based word-oriented stream cipher, Algorithm-5 uses $(K_1, IV_1, K_2, IV_2)$ pair to mask the configuration matrix of the LFSR. The process of generating the challenge matrix occurs during the pre-computing phase prior to the initialization cycle of the Cipher. The $\sigma-$LFSR configuration corresponding to the M-companion matrix generated in Algorithm-4 is then used for Keystream Generation.

\subsubsection{Doing away with the Public Parameter $C$}
For the reciever to recover the $\sigma$-LFSR configuration, in addition to the shared key and publicly known IV, he/she should know the primitive polynomial that is sampled in Algorithm \ref{prim} and the entries of the distance vector. If these entries can be securely generated using the secret key, then the public parameter will not be required.
This will avoid the communication cost incurred for sharing the public parameter. 
\subsection{Initialization Phase: }
The initialization phase of this scheme is as same as the initialization phase of SNOW 2.0 or SNOW 3G. It runs for 32 clock cycles using the same feedback polynomial over $GF(2^{32})$ and the adversary is not allowed to access the keystreams during this period. At the last clock, it replaces the coefficients of the feedback polynomial over $GF(2^{32})$  by the gain matrices   $B_i\in F_2^{32 \times 32}$  $M_1$. 
\subsection{Keystream Generation Phase: }
In the Key generation phase, the LFSR part of SNOW 2.0 and SNOW 3G is regulated by the following equation:
\begin{equation}
	St_{15}^{t+1}=B_{0}St_{0}^{t} + B_{1}St_{1}^t + \cdots + B_{15}St_{15}^t   
\end{equation}
, where $B_i\in F_2^{32 \times 32}, i \in [16]$ is the gain matrices of $M_1$. And, each delay block($St_j,j\in[16]$) is updated as:
\begin{equation}\label{Up}
	St_{k}^{t+1}=\begin{cases}
		St_{k+t+1}^{0} &                                              0 \leq k+t+1 \leq 15\\
		\sum_{i=0}^{15}B_{i}St_{i}^{t+k} & k+t+1 >15
	\end{cases}
\end{equation}
FSM update part is the same as it was in SNOW 2.0 and SNOW 3G in section 4.

\subsection{Security of the proposed scheme}
Here, in addition to the keystream, the attacker has access to the public matrix $C$ generated by Algorithm \ref{Public_Matrix}. He/She could potentially use this matrix to retrieve the gain matrices of the LFSR. On average, a brute force attack will require $2^{255}$ guesses to get to the correct key.  We now show that other methods of deriving the feedback configuration are computationally more expensive.

\begin{lemma}
Given any symmetric invertible matrix $S' \in \mathbb{F}_2^{mb \times mb}$, there exists a matrix $M' \in \mathbb{F}_2^{m \times mb}$ such that the public parameter $C$ is a product of $M'$ and $S$.
\end{lemma}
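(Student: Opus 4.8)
The plan is to exhibit $M'$ by an explicit formula rather than argue for it abstractly; we read the conclusion of the lemma as the assertion $C = M'\times S'$. Since $S'$ is assumed invertible over $\mathbb{F}_2$, its inverse $(S')^{-1}\in\mathbb{F}_2^{mb\times mb}$ exists, and the first step is simply to set $M' := C\,(S')^{-1}$. The second step is a dimension check: from Algorithm \ref{Public_Matrix} the public parameter $C = M_1[480:512,:]\times S$ has $m$ rows and $mb$ columns (in the concrete instantiation $m=32$, $mb=512$), and $(S')^{-1}$ is $mb\times mb$, so the product $M' = C\,(S')^{-1}$ lies in $\mathbb{F}_2^{m\times mb}$, exactly the shape the lemma demands.

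The third step is the verification, which is a single line of associativity over $\mathbb{F}_2$: $M'\times S' = \bigl(C\,(S')^{-1}\bigr)\times S' = C\times\bigl((S')^{-1}S'\bigr) = C\times I = C$. It is worth noting that nowhere does the argument use that $S'$ is symmetric; only invertibility is needed. Symmetry is imposed in the hypothesis purely so that the family of matrices $S'$ under consideration matches the family containing the genuine secret matrix $S$, which is built in Algorithm \ref{Public_Matrix} as $S = L\times U$ with $L = U^{T}$ and is therefore symmetric; this way the lemma speaks about exactly the candidates an adversary would try.

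The part that genuinely carries weight --- and the step I would spend most of the write-up on --- is the interpretation that converts this elementary fact into a security statement. Because every symmetric invertible $S'$ admits an $M'$ with $C = M'S'$, the public matrix $C$ on its own contains no information that distinguishes the true $S$ (equivalently, the true key pair $(K_1,K_2)$) from any other symmetric invertible matrix; in particular, an attacker cannot solve for $S$ from $C$ alone. To make progress, the adversary must additionally require that $M'$ be the block consisting of the last $m$ rows of an $m$-companion matrix of a $z$-primitive $\sigma$-LFSR, and then search for an $S'$ compatible with this structural constraint. I would close the argument by comparing the number of symmetric invertible matrices in $\mathbb{F}_2^{mb\times mb}$ (and, more refined, the number consistent with a fixed admissible $M'$) against the $2^{n}$ key pairs and the count of admissible $z$-primitive configurations established earlier, and conclude that this residual ambiguity is far too large for any such approach to beat the $\mathcal{O}(2^{n-1})$ exhaustive key search. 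The main obstacle is therefore not the existence of $M'$, which is immediate, but making this last counting comparison precise enough to rule out a shortcut.
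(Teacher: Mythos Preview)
Your proof is correct and is essentially identical to the paper's: the paper sets $M' = M_1[mb-m+1:mb,:]\times S \times S'^{-1}$, which, since $C = M_1[mb-m+1:mb,:]\times S$, is precisely your $M' = C\,(S')^{-1}$, and then verifies $C = M'\times S'$ in one line. Your observation that symmetry of $S'$ is not actually used in the argument is accurate and is not made explicit in the paper.
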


\begin{proof}
Let $M_1$ be the configuration matrix of the $\sigma$-LFSR and let $C = M_1[mb-m+1:mb,:]\times S$ be the public parameter matrix derived in Algorithm \ref{Public_Matrix}. Given any symmetric invertible matrix $S' \in \mathbb{F}_2^{mb \times mb}$,
\begin{eqnarray*}
C &=& M_1[mb-m+1:mb,:]\times S \times S'^{-1}\times S'\\
&=& M'\times S'
\end{eqnarray*}
where $M' =  M_1[mb-m+1:mb,:]\times S \times S'^{-1} \in \mathbb{F}_2^{m \times mb}$.
\end{proof}
To find the gain matrices of the $\sigma$-LFSR from the public parameter, one can sample an invertible symmetric matrix $S'$  and find the corresponding values of $M'$ such that $C =  M'\times S'$. If the feedback configuration corresponding to $M'$ is $z$-primitive, then assuming this to be the feedback configuration, one can launch any existing attack to generate the initial state of the LFSR. The number of invertible symmetric matrices in 
$\mathbb{F}_2^{mb \times mb}$ is $2^{\frac{mb(mb-1)}{2}}$, Hence, the average number of attempts needed to get the correct invertible matrix is $\mathcal{O}(2^{\frac{mb(mb-1)}{2}-1})$ which is prohibitively high. 

Alternatively, one could sample a $z$-primitive LFSR configuration and consider the configuration matrix as a potential $M_1$. One can then check if the public parameter can be written as a product of the last $m$ rows of this matrix and a symmetric matrix $S$. This involves checking if a set of $m\times mb$ linear equations in $\frac{mb(mb-1)}{2}$ variables has a solution (For SNOW 2 and SNOW 3G the number of equations is 8192 and the number of variables is 130816). If this set of equations has a solution, one can launch any known attack to recover the initial state of the LFSR. For a given value of $m$ and $b$, the total number of $z$-primitive LFSR configurations is $\frac{|GL(m,\mathbb{F}_2)|}{2^m-1} \times \frac{\phi(2^{mb}-1)}{mb}$. For SNOW 2 and SNOW 3 this number turns out to be $2^{1493}$. Therefore, the average number of $z$-primitive LFSRs that  need to be sampled to get to the right configuration is approximately $2^{1492}$. 

\subsection{Resistance to Attacks: }
Several Known plaintext attacks like Algebraic Attacks\cite{billet2005resistance,courtois2008algebraic}, Distinguishing Attacks(\cite{watanabe2003distinguishing,nyberg2006improved}),Fast Correlation Attacks\cite{lee2008cryptanalysis,zhang2015fast,todo2018fast,yang2019vectorized,gong2020fast}, Guess and Determine Attacks\cite{New-SN-03,ahmadi2009heuristic}, Cache Timing Attacks(\cite{leander2009cache,brumley2010consecutive}) are reported for SNOW 2.0 and SNOW 3G. All these attacks either use the feedback equation of the $\sigma$-LFSR or the linear recurring relation corresponding to the characteristic polynomial of the  LFSR . A method of hiding the configuration matrix is explained in \cite{nandi2022key}. However, the characteristic polynomial of the $\sigma$-LFSR is publicly known. The scheme given in \cite{nandi2022key} is therefore vulnerable to schemes that use the characteristic polynomial of the $\sigma$-LFSR. This polynomial gives rise to a linear recurring equation of degree 512 with coefficients in $\mathbb{F}_2$.These schemes include the fast correlation attack given in \cite{gong2020fast} and the fault attack on SNOW3G given in \cite{New-SN-07}. As the characteristic polynomial in this scheme is not known. Therefore, to get to the characteristic polynomial the attacker has to keep sampling from the set of primitive polynomials of degree 512 till he/she gets to the correct characteristic polynomial. As the number of primitive polynomials of degree 512 over $\mathbb{F}_2$ is $2^{502}$, the attacker on average will sample $2^{501}$ polynomials. Alternatively, the attacker could try to generate the symmetric matrix $S$ in Algorithm \ref{Public_Matrix} by sampling the key space. This on average will take $2^{255}$ attempts to get to the correct key.

\begin{table}[]
    \centering
    \begin{tabular}{|c|c|c|c|}
    \hline
      \textbf{References} & \textbf{Applied Ciphers}  & \textbf{FCA Complexity} & \textbf{Our Scheme}\\
      \hline
      Lee et al.,2008\cite{lee2008cryptanalysis}           &   SNOW 2.0     & $2^{204.38}$   &   $2^{256}$  \\
      \hline
    Zhang et al.,2015\cite{zhang2015fast}         & SNOW 2.0   & $2^{164.15}$   &   $2^{256}$    \\
      \hline
    Todo et al.,2018\cite{funabiki2018several} &  SNOW 2.0     & $2^{162.91}$  & $2^{256}$\\
    \hline
    Yang et al.,2019\cite{yang2019vectorized} & SNOW 3G   &  $2^{177}$        & $2^{256}$\\
    \hline
    Gong et al.,2020\cite{gong2020fast}  &  SNOW 2.0, SNOW 3G   & $2^{162.86},2^{222.33}$   &$2^{256},2^{256}$\\
    \hline
    \end{tabular}
    \caption{Comparison results of our scheme with various schemes}
    \label{tab:my_label}
\end{table}

\section{Experiment Result}

The proposed scheme has been implemented in C (using a GCC compiler) on a machine with an Intel Core i5-1135G7 processor having  8GB RAM and a 512 GB HD drive. The parallel implementation of $e_1^n \times M_f^{z \times{d_i}}$ for $n=512$, $M_f\in F_2^{512 \times 512}$ and $1\leq |d_i|\leq 512$ where $i\in \{1,31\}$, took a total of $0.04$ seconds. The calculation  $P \times Q \times P^{-1}$(step-6 in Algorithm-\ref{MAlgo}) took another .08 seconds. Algorithm \ref{MAlgo} was completed in 0.13 seconds. The total initialization time for our scheme is on average 0.2 seconds (Averaged over 200 test cases). Besides, to accelerate the keystream generation process, $32$- bit vector-matrix multiplications in the $\sigma-$LFSR are done using Algorithm-\ref{MM} with constant time complexity, $\mathbb{O}(c)$, where $c=32$. This lead to an improvement in performance over existing implementations. The Key generation times(KGT) for SNOW 3G and our proposed scheme are given in the following table.
\begin{note}
     Look up table based implementation of LFSR in SNOW 3G can be cryptanalysed by cache timing attacks\cite{leander2009cache}. Hence, we have considered  the implementation of LFSR part of SNOW 3G(specially field multiplication over $GF(2^{32})$) is programmed without any look up tables.
\end{note}
\begin{table}[H]

    \centering
    \begin{tabular}{|c|c|c|}
    \hline
       \textbf{Number of Keystreams}  & \textbf{KGT for SNOW3G} & \textbf{KGT for Proposed SNOW3G)}\\
      \hline
         $2^{10}$     & .009490 Seconds  & .003138 Seconds \\
      \hline
       $2^{15}$ & .2032 Seconds & .0586 Seconds  \\
      \hline
       $2^{20}$ &  6.397 Seconds & 1.8152 Seconds\\
      \hline   
    \end{tabular}
\caption{Comparison of Key Generation Time of our proposed scheme with SNOW 3G}
    \label{tab:my_label}
\end{table}

\section{Conclusion and Future Work}
In this article, we have introduced a $z$-primitive sigma-LFSR generation algorithm to generate an $m$-companion matrix of m input-output and b number of delay blocks for word-based LFSR. Besides, to hide the feedback polynomial over $GF(2)$, we have multiplied a key-dependent invertible matrix with the $m$-companion matrix. Finding the part of the multiplied matrix that is shared as a public parameter is analogous to searching a symmetric matrix from the space of $2^{256 \times 511}$. Our scheme can resist Fast Correlation Attacks (FCA). We have shown that applying our scheme to SNOW 2.0, SNOW 3G, and Sosemanuk can withstand FCA. Besides, our scheme is robust against any known plaintext attacks based on the Feedback equation of the LFSR. The future works of this scheme are as follows
\begin{itemize}
\item  Implementation of  word based stream cipher with word size $m=64,128$ and comparison with SNOW V and SNOW VI\cite{ekdahl2021snow}.
\item Developing the scheme without the public parameter $C\in F_2^{32 \times 512}$. 
\end{itemize}

\bibliographystyle{splncs04}
\bibliography{zsigma}
\end{document}